\theoremstyle{plain}
\newtheorem{theorem}{Theorem}[section]
\newtheorem{lemma}[theorem]{Lemma}
\newtheorem{proposition}[theorem]{Proposition}
\newtheorem{corollary}[theorem]{Corollary}
\theoremstyle{definition}
\theoremstyle{remark}
\newtheorem{remark}[theorem]{Remark}
\begin{document}

\title[Watermelons on the half-plane]{Watermelons on the half-plane}
\author{Kh.D. Nurligareev$^{1,3}$,
 A.M. Povolotsky$^{2,3}$}
\address{$^1$ LIPN, University Sorbonne Paris Nord, F-93430, Villetaneuse, France}
\address{$^2$ Bogoliubov Laboratory of Theoretical Physics, Joint Institute for
Nuclear Research, 141980, Dubna, Russia}
\address{$^3$ National Research University Higher School of Economics, 101000, Moscow, Russia}
\eads{\mailto{$^1$khaydar.nurligareev@lipn.univ-paris13.fr}, \mailto{$^2$alexander.povolotsky@gmail.com}}

\begin{abstract}
 We study the watermelon probabilities in the uniform spanning forests on the two-dimensional semi-infinite square lattice near either open or closed boundary to which the forests can or cannot be rooted, respectively.
 We derive universal power laws describing the asymptotic decay of these probabilities with the distance between the reference points growing to infinity, as well as their non-universal constant prefactors.
 The obtained exponents match with the previous predictions made for the related dense polymer models using the Coulomb Gas technique and Conformal Field Theory, as well as with the lattice calculations made by other authors in different settings.
 We also discuss the logarithmic corrections some authors argued to appear in the watermelon correlation functions on the infinite lattice.
 We show that the full account for diverging terms of the lattice Green function, which  
ensures the correct probability normalization,  provides the pure power law decay in the case of semi-infinite lattice with closed boundary studied here, as well as in the case of  infinite lattice discussed elsewhere.
 The solution is based on the all-minors generalization of the Kirchhoff matrix tree theorem, the image method and the developed asymptotic expansion of the Kirchhoff determinants.  
\end{abstract}
\noindent{\it Keywords\/}: {Uniform Spanning Forests, Kirchhoff theorem, lattice Green function}
\maketitle

\section{Introduction}\label{sec: Intro}

An interest to the problem of  spanning trees (ST) on  graphs goes back to the renowned paper by  G. Kirchhoff,  where he proved  what is now known as the Matrix Tree Theorem~\cite{Kirchhoff1847}. Originated from the theory of electric circuits, this theorem   gives a way of counting ST of  a given graph by manipulating  with the matrix of discrete Laplacian constructed out of the graph. Since  then, the subject of ST developed significantly, having become a substantial part of the graph theory.

Being initially a combinatorial object, ST are naturally incorporated into the framework of statistical physics and  probability theory, where they are considered as randomly chosen from the set of ST of a given graph according to a prescribed probability distribution. An example is the model of uniform spanning tree (UST) assigning the same probability to every spanning tree of a given graph or, more generally, the weighted uniform spanning tree model (WUST), in which every particular spanning tree is assigned a probability proportional to the product of the weights of the edges constituting the tree.

In the following, we mention only a few of many known applications. The bijection between perfect matchings or dimer packings on certain graphs with ST on their subgraphs~\cite{Temperley1973,Priezzhev1985} allows one to  study the Gibbsian measures on dimer configurations  in terms of the statistics of UST or, more generally, WUST models. Also, the UST model can be reformulated as a particular limit of the Fortuin-Kasteleyn random cluster model~\cite{FortuinKasteleyn1972}. By further mapping of the latter to the $q$-component Potts model, the random ST can be thought of as a formal $q\to 0$ limit of the Potts model~\cite{Stephen1976}. Another important  application to the non-equilibrium statistical physics appeared in the theory of self-organized criticality~\cite{Bak2013}, where a bijection between the set of ST and the recurrent set of configurations of the paradigmatic model of the theory, Abelian sandpile model (ASM), was established~\cite{MajumdarDhar1992}. Thus, the stationary measure of ASM can be identified with the UST measure.

It is also worth mentioning important connections with the theory of Markov processes. A Markov chain with an arbitrary   transition graph can be recast as the random walk on the set of ST of this graph. Thus, the stationary measure of the chain induces a measure on this set. This fact was in the core of the early algorithms for generating random ST with a prescribed distribution~\cite{Broder1989, Aldous1990,ProppWilson1998}. In particular, simple random walks (RW) can be used for sampling the UST. However, the most efficient algorithm proposed by Wilson~\cite{Wilson1996} exploited the relation of the UST model with yet another Markov process, the loop erased random walk (LERW).  Conversely, many results obtained on the UST  can be translated into the statistics of the LERW. In particular, the distribution of a path on a spanning tree is the same as the one of the~LERW~\cite{Pemantle1991}.

In the framework of probability theory, the consideration of ST can be naturally extended to infinite graphs. It was shown~\cite{BenjaminiLyonsPeresSchramm2001} that for many graphs there exists an infinite graph limit of this measure called the uniform spanning forests (USF). A~reasonable question in this context is how a typical spanning tree (or forest) looks like. What is the statistics of the local as well as the large-scale events within this random object? Over the last twenty years, these questions were the subject of extensive studies, which culminated in  many bright results.

The primary interest of the physical community to  the UST-related models on infinite lattices was due to the fact that they gave relatively  simple examples of exactly solvable systems at criticality that revealed itself in the long range correlations. Specifically, the  critical exponents and scaling functions characterizing the large-scale universal behavior of the models on regular $d$-dimensional lattices could be obtained from combinatorial manipulations with Kirchhoff theorem~\cite{Priezzhev1994}. On the other hand, since the mid-eighties of the twentieth century, there was an understanding that the continuous limit of UST in two dimensions was  described by the $c=-2$ conformal field theory (CFT)~\cite{Duplantier1986, Ivashkevich1999}. This correspondence was later put on the firm mathematical ground in the framework of the Schramm-Loewner evolution (SLE) approach~\cite{LawlerSchrammWerner2011}. Therefore, the lattice calculations serve as a verification of CFT predictions and vice versa. A~remarkable fact is that the correlations observed in the UST-related models fall not only in the realm of usual $c=-2$ CFT but also in its logarithmic version~\cite{Cardy2013}. Still, there are no many examples of exact lattice calculations available to illustrate the latter.

One of the examples is the two-point height distribution in the ASM. The joint probability of minimal heights at two points in the bulk of the lattice separated by a~large distance shows asymptotically a power law distance dependence~\cite{MajumdarDhar1991}. So do  the two point height probabilities associated with points near the closed or open boundary of the lattice for any values of the heights \cite{BrankovIvashkevichPriezzhev1993, Ivashkevich1994}. In contrast, when one of heights is greater than minimal, the power law asymptotics acquires the logarithmic prefactor \cite{PoghosyanGrigorevPriezzhevRuelle2008, PoghosyanGrigorevPriezzhevRuelle2010}. It was interpreted as a lattice analogue of the CFT correlation functions of certain fields and their logarithmic partners~\cite{Ruelle2013}.  

In the language of ST on the lattice, the logarithmic corrections to the power law asymptotics come from counting certain non-local events. Specifically, the derivation of height probabilities by Priezzhev is based on counting  non-local spanning tree sub-configurations coined the theta-graphs \cite{Priezzhev1994}. It is the interaction  of the theta-graph with distant defects inserted into the lattice that is responsible for the logarithm in the correlation functions.

The theta-graph itself is the smallest $k=3$ example of the $k$-leg watermelon, that is, $k$ disjoint paths on the spanning tree starting and terminating in two distant groups of closely spaced vertices. The question of whether a logarithm is present in the watermelon probability itself has been debated for some time. 

In the late eighties of the twentieth century, the $\Or(n)$ loop model, as well as related $\Or(n)$ vector model, Potts model, polymer models, percolation models, self-avoiding walks and some other models were extensively studied~\cite{Duplantier1989} with the methods of Coulomb Gas (CG) theory~\cite{Nienhuis1987} and CFT~\cite{Cardy1987}. In particular, a collection of critical exponents describing power law decay of the watermelon correlation functions low-temperature phase of the $n\to 0$ limit of $\Or(n)$ model, as well as tightly related with it ST and dense polymer models, were predicted, both in the bulk of the infinite plane and near the boundary of semi-infinite half-plane~\cite{Duplantier1986, DuplantierSaleur1986, DuplantierSaleur1987, Duplantier1987, DuplantierDavid1988}. All these models belong to the same universality class and, hence, have the same power law large-scale behavior.  

Watermelons in UST on bounded regions  of the square lattice were studied by Kenyon~\cite{Kenyon2000}. He obtained the asymptotics of so-called crossing probability on the rectangular domain of the square lattice with either free or periodic boundary conditions. In these almost one-dimensional geometries, the correlations decay exponentially as the lattice length grows to infinity. Kenyon also proved the conformal invariance of the results, which allows one to associate the results on the strip and cylinder with those for the semi-annulus and annulus on the plane, respectively, where the exponential decay turns into the power laws with exponents matching with the Coulomb Gas predictions.

Another attempt to study the watermelon correlation functions right on the infinite square lattice was undertaken by Ivashkevich and Hu~\cite{IvashkevichHu2005}. In addition to the power law with the exponent predicted from CG theory for the infinite plane and found by Kenyon for the annulus, their asymptotic formula also possessed a logarithmic prefactor. Gorsky, Nechaev, Poghosyan and Priezzhev~\cite{GorskyNechaevPoghosyanPriezzhev2013}, who elaborated the arguments of~\cite{IvashkevichHu2005}, confirmed this result. 

It is of interest to further clarify the behavior of watermelon-related correlation functions, as well as to consider the same problem in different settings. In the present paper, we bring the setting of \cite{IvashkevichHu2005, GorskyNechaevPoghosyanPriezzhev2013} to the half-infinite square lattice. We consider the USF on the half-lattices $\mathbb{Z}\times\mathbb{Z}_{\geqslant0}$ and $\mathbb{Z}\times\mathbb{Z}_{>0}$ with open and closed boundary conditions (BC) at the lowest row, respectively, which imply that the forest components can or cannot be rooted to its sites. In addition, we insert a string of $k$ auxiliary roots near this row and a similar group of $k$ sites at a horizontal distance $r$ from the roots. We are interested in the probability for $k$ components corresponding to the roots to be connected to the given sites, i.e. to form a watermelon of the length $r$. As a result, we obtain two power laws for open and closed boundaries with $k$-dependent exponents and also the non-universal constant coefficients of the leading asymptotics. 


Note that the system near the closed boundary   is in a sense similar to the situation in the bulk considered in~\cite{IvashkevichHu2005, GorskyNechaevPoghosyanPriezzhev2013}, where the logarithmic factor was claimed to be a~signature of the logarithmic CFT. In the course of our derivation, we also reconsider those results. As we discuss in Remark~\ref{rem}  below, the quantity evaluated in~\cite{IvashkevichHu2005, GorskyNechaevPoghosyanPriezzhev2013} was actually not the probability, but a finite part of the infinite ratio of the number of watermelons to the number of ST. At the same time, a suitably normalized quantity defined to have a meaning of the watermelon probability has a pure power law asymptotics both in our case and in the bulk situation of ~\cite{IvashkevichHu2005, GorskyNechaevPoghosyanPriezzhev2013}.

Also, as we have mentioned above, many  statements about the UST model can be interpreted in the language of the LERW.  Likewise, we interpret the results obtained from the analysis of USF as a specially conditioned probability for $k$ LERW to connect specified sites near open or closed boundary of the half-infinite lattices. Similar constructions were studied by Fomin in~\cite{Fomin2001}, where the total positivity of so-called walk and hitting matrices was proved. The explanation of the total positivity was the fact that their minors were conditioned LERW partition functions. The conjecture made in~\cite{Fomin2001} that the statements survive the scaling limit and also hold for the Brownian motion was proved in~\cite{KozdronLawler2005}. In particular, the crossing exponents proved earlier by Kenyon for UST and LERW were reproduced. The ideas used in the above papers, were later developed in works of Kenyon and Wilson~\cite{KenyonWilson2011} and Karrila, Kyt\"ol\"a and Peltola~\cite{KarrilaKytolaPeltola2020}. They considered the probabilities of general boundary visit events for planar LERW, as well as connectivity events for branches in UST, and proved their convergence to the formulas of SLE theory. Those results have much larger generality than ours, and the exponents we obtain, in principle, should follow from their general results.

The rest of the article is organized as follows. In Section~\ref{sec: LERW and Watermelons}, we introduce necessary definitions, state main theorems and discuss their relation with earlier results. In Section~\ref{sec: Method of solution in general}, we discuss tools used below that are based on the generalization of the Kirchhoff matrix tree theorem and give necessary details about the Green functions for the infinite lattice and half-lattices that we consider. As a result, the  probability we are looking for is represented in the form of a determinant of a matrix of special structure. The asymptotic evaluation of such determinants is  performed in Section~\ref{sec: Counting the determinants}. The obtained formulas are applied to  particular cases under consideration in Section~\ref{sec: Isotropic Watermelons} that completes the proof of the main statements.

\section{Spanning forests, LERW and Watermelons}\label{sec: LERW and Watermelons}

\subsection{Definitions and results}

Let $\mathcal{G}=(V,E)$ be a finite undirected connected graph without self-loops and multiple edges. We distinguish a subset  of sites, $\partial\subset V,$ called (open or Dirichlet) \emph{boundary}. A~\emph{spanning forest} (SF) on $\mathcal{G}$ rooted to the boundary is a subgraph  of $\mathcal{G}$ containing all the vertices and no edge cycles, such that every its connected component includes exactly one site of the boundary (then we say that the component is \emph{rooted} to that site).
 \footnote{The adjectives ``open'' or ``Dirichlet'' applied to the boundary  emphasize the fact that SF can be rooted to its sites. Also, in context of the semi-infinite lattices, it marks the spacial boundary of the half-plane. In contrast,   the terms ``closed'' or ``Neumann'' boundary used below do not imply a boundary in the SF sense, rather having only the spacial meaning. Which one is meant in each case will be clear from the context and should cause no ambiguity. }
 Some connected components, which we refer to as \emph{empty}, may consist of a~single  boundary site. The edges of every non-empty component can be given a natural orientation towards the boundary. Therefore, we will often say about the directed forest. The SF rooted to the boundary consisting of a single site are ST.
 \footnote{It is customary in the literature to consider the wired BC, which suggest that the boundary consists of a single site. 
This is an equivalent formulation obtained by gluing  all the boundary sites to one, so that  only ST can exist on a finite graph.  
Below we keep to the multi-site boundary for further notational convenience.}

Denote the set of SF rooted to $\partial$ by $\mathfrak{F}(\partial)$. One can define a probability measure on~$\mathfrak{F}(\partial)$ by assigning a weight $w(e)$ to every edge $e\in E$. In this case, the probability of a~spanning forest $\mathcal{F}$ is proportional to the product of weights of its edges,
\begin{equation}\label{weight}
 W(\mathcal{F}) = \prod_{e\in\mathcal{F}}w(e).
\end{equation}
In other words, this probability is equal to
\begin{equation}\label{eq: sfweight}
 \mathbb{P}_\mathcal{G}^{\partial}(\mathcal{F}) = \frac{W(\mathcal{F})}{Z_\mathcal{G}(\partial)},
\end{equation}
where 
\begin{equation}\label{eq: Zf}
Z_\mathcal{G}(\partial)=\sum_{\mathcal{F}\in\mathfrak{F}(\partial)} W(\mathcal{F}),
\end{equation}
is the normalization factor referred to as  partition function. In particular, when all the weights are equal, say $w(e)=1$ for any $e\in E$, we obtain the model of uniform rooted~SF. In general, we say about the weighted rooted SF.

In the following, we consider SF that have additional components rooted to some fixed set of sites. In particular, we want to control the components that connect specific sites to their roots. To this end, let us  introduce notations containing this extra information. Given $k\geqslant 0$ and  $n\geqslant 0$, let $I=(i_1,\dots,i_k)$, $J=(j_1,\dots,j_k)$ and $R=(r_1,\dots,r_n)$ be three disjoint groups of sites. Then we denote
\[
 \mathfrak{F}(IJ|R|\partial) \equiv
 \mathfrak{F}(i_1j_1|\dots|i_kj_k|r_1|\dots|r_n|\partial)
\]
the set of forests such that for any $\mathcal{F}\in\mathfrak{F}(IJ|R|\partial)$:
\begin{itemize}
 \item
  every component of $\mathcal{F}$ is rooted to $I\cup R\cup\partial$,
 \item
  for every $l=1,\ldots,k$, the site $j_l$ belongs to the component rooted to $i_l$.
\end{itemize}
 The corresponding partition function is
\[
 Z_{\mathcal{G}}(IJ|R|\partial) =
 \sum_{\mathcal{F}\in\mathfrak{F}(IJ|R|\partial)} W(\mathcal{F}).
\]
Note that if $k=0$ or $n=0$, then the corresponding sets are empty and we omit them in the notation. Thus, we write $\mathfrak{F}(R|\partial)$ for $k=0$, as well as $\mathfrak{F}(IJ|\partial)$ for $n=0$.

We are interested in specific SF called watermelons. Given two distant sets $I$ and $J$ of $k$ closely spaced sites each, a $k$-\emph{watermelons} (embedded into the forest rooted to the boundary $\partial$) is an element from the set $\mathfrak{F}(IJ|\partial)$. The set of watermelons is a subset of the set of all forests rooted to the extended boundary $\partial \cup I$, that is, $\mathfrak{F}(IJ|\partial)\subset\mathfrak{F}(I,\partial)$. The measure on such forests is defined by (\ref{eq: sfweight}-\ref{eq: Zf}) with the boundary $\partial\cup I$ instead of~$\partial$. In particular, the watermelon probability, which is the main subject of our interest, is
\begin{equation}\label{eq: watermelonprob}
\mathbb{P}_{\mathcal{G}}^{I\cup\partial}(\mathfrak{F}(IJ|\partial))=\frac{Z_{\mathcal{G}}(IJ|\partial)}{Z_{\mathcal{G}}(I|\partial)}.
\end{equation}


As it has been discussed in the introduction, the uniform or weighted ST are tightly related to the LERW problem, so the probabilistic results obtained on ST answer questions related to LERW. To explain the connection between our problem and LERW, let us recall \emph{Wilson's algorithm} of generating weighted rooted ST~\cite{Wilson1996}. Given a finite graph $\mathcal{G}$ with edge weights being real positive numbers, the algorithm is as follows:
\begin{enumerate}
 \item
  Fix the root vertex $v_1$ and enumerate the other vertices of $\mathcal{G}$ the way you like: $V=\{v_1,\ldots,v_N\}$.\par
 \item
  Define $\EuScript{U}_1=\{v_1\}$ to be the set consisting of the single vertex $v_1$.\par
 \item
  For every integer $k=2,\ldots,N$, consider a weighted loop-erased random walk $LE[\pi(v_k,\EuScript{U}_{k-1})]$, which is the trajectory of a weighted random walk $\pi(v_k,\EuScript{U}_{k-1})$ that starts at $v_k$, stops having reached $\EuScript{U}_{k-1}$ and has all its loops erased in chronological order. A weighted random walk is a Markov chain on $V$ with transition probabilities $p(u,v)$ (from $u\in V$ to  $v\in V$) which are normalized weights,
 \[
  p(u,v) = \frac{w(u,v)}{\sum_{t\in V}w(u,t)}.
 \]
Given a trajectory $LE(\pi(v_k,\EuScript{U}_{k-1}))$, define $\EuScript{U}_k$ to be the union of $\EuScript{U}_{k-1}$ and the set of trajectory vertices.\par
\item Define the spanning tree $\mathcal{T}$ to be the union
 \[
  \mathcal{T} =
  \bigcup_{i=1}^{N-1}
   LE[\pi(v_i+1,\EuScript{U}_{i})].
 \]
\end{enumerate}

Thus, this algorithm generates a weighted    spanning tree on a graph rooted to a~single boundary site $v_1$. Given a graph $\mathcal{G}$ with a multisite boundary $\partial$, we can initially define $\EuScript{U}_1=\partial$ and then run the rest of the algorithm (step 3) unchanged. As a result, we obtain the weighted SF rooted to $\partial$.

What is a watermelon from the set $\mathfrak{F}(IJ|\partial)$ in terms of LERW?   Let us enumerate the sites in $V\backslash (\partial\cup I)$, so that the first $k$ sites in the list are $j_1,\dots,j_k.$ Then, we apply Wilson's algorithm on $\mathcal{G}$ with boundary $\partial\cup I$ running LERW from every site in the list subsequently. To obtain a watermelon, every of the first $k$ trajectories of the LERW, starting from $j_1,\dots,j_k$, should stop at $i_1,\dots,i_k$, respectively, so that each of the corresponding random walks does not touch all the previous LERW or the other boundary sites on the way. The further application of Wilson's algorithm constructs~SF preserving the first $k$ trajectories. Thus, relation \eref{eq: watermelonprob} gives us the probability of the sequence of $k$ non-intersecting LERW trajectories starting from $j_1,\dots,j_n$ to first reach the set $I\cup\partial$ at sites $i_1,\dots,i_k$, respectively, so that the random walks $\pi
(j_l,i_l)$ do not reach  the preceding LERWs  as well as the boundary, i.e. 
\[
 \pi(j_l,i_l) \bigcap
  \{LE[\pi(j_{1},i_{1})], \ldots, 
   LE[\pi(j_{l-1},i_{l-1})], \partial\} = 
 \emptyset,\quad  l=2,\dots,k.
\]

Now we give a precise meaning to the problem of watermelon probabilities under consideration. Let $\mathcal{L}^{\mathrm{op}}=(V^\mathrm{op},E^\mathrm{op})$ and $\mathcal{L}^{\mathrm{cl}}=(V^\mathrm{cl},E^\mathrm{cl})$ be  two instances of the semi-infinite square lattice, which will be referred to as the half-lattices with open and closed BC, respectively. Here $V^{\mathrm{op}} = \mathbb{Z}_{\geqslant 0}\times \mathbb{Z}$ and $V^\mathrm{cl} = \mathbb{Z}_{>0}\times \mathbb{Z}$, respectively, and
\[
 E^{\mathrm{op},\mathrm{cl}} = 
 \{e=(v,v+\mathbf{e}_i) \mid v\in V^{\mathrm{op},\mathrm{cl}},i=1,2\}.
\]
 In the former (open) case, the lowest raw consists of boundary sites $\partial^{\mathrm{op}}=\{(k,0)\}_{k{\in\mathbb{Z}}}$ in the sense defined above.

We fix two sets of roots and endpoints of watermelons to be
\begin{eqnarray}\label{eq: I,J}
 I=\{(i,1)\}_{1\leqslant i\leqslant k }, \quad
 J=\{(i+r,1)\}_{1\leqslant i\leqslant k }. 
\end{eqnarray}
 Note that available analytic tools are limited to very special choices of the sets  $I,J$. This issue was discussed in detail in the paper of Fomin \cite{Fomin2001} in the context of total positivity of matrices. In particular, Fomin considered pairings of sites of a graph with LERW, similar to our watermelons, and proved the determinantal formulas for signed sums of their weights   over different pairings extending the Karlin-McGregor~\cite{KarlinMcGregor1959} and Lindtr\"om-Gessel-Viennot~\cite{Lindstrom1973, GesselViennot1985} theorems to the case of LERW.

Similar formulas appear below in the context of SF from the  generalization of Kirchhoff theorem. For these sums to have a meaning of probabilities, all their summands should at least be non-negative. This corresponds to the positivity conditions studied in~\cite{Fomin2001}. In simple terms, geometric constraints should prevent pairings with a wrong sign. One of the possible constraints is used for the watermelon in the bulk of infinite square lattice in~\cite{IvashkevichHu2005, GorskyNechaevPoghosyanPriezzhev2013}, where  the sets $I$ and $J$ are taken to be \emph{zigzags} with odd number $k$ of points (see Figure~\ref{fig: fences}). The oddness of $k$  guarantees that the obtained sum counts the forests of two types, both with the same sign. Namely, in~\cite{IvashkevichHu2005, GorskyNechaevPoghosyanPriezzhev2013} the following types are counted: $\mathfrak{F}(i_1j_{1}|\ldots|i_kj_{k}|\partial)$ and $\mathfrak{F}(i_1j_{\sigma(1)}|\ldots|i_kj_{\sigma(k)}|\partial)$, where permutation $\sigma$ is a long cycle. In our case, the strings of sites are located near the boundary of the half-lattice, and hence, only the watermelons of the form $\mathfrak{F}(i_1j_{k}|\ldots|i_kj_{1}|\partial)$ are geometrically possible.

\begin{figure}[ht]
 \centerline{\includegraphics[width=0.35\textwidth]{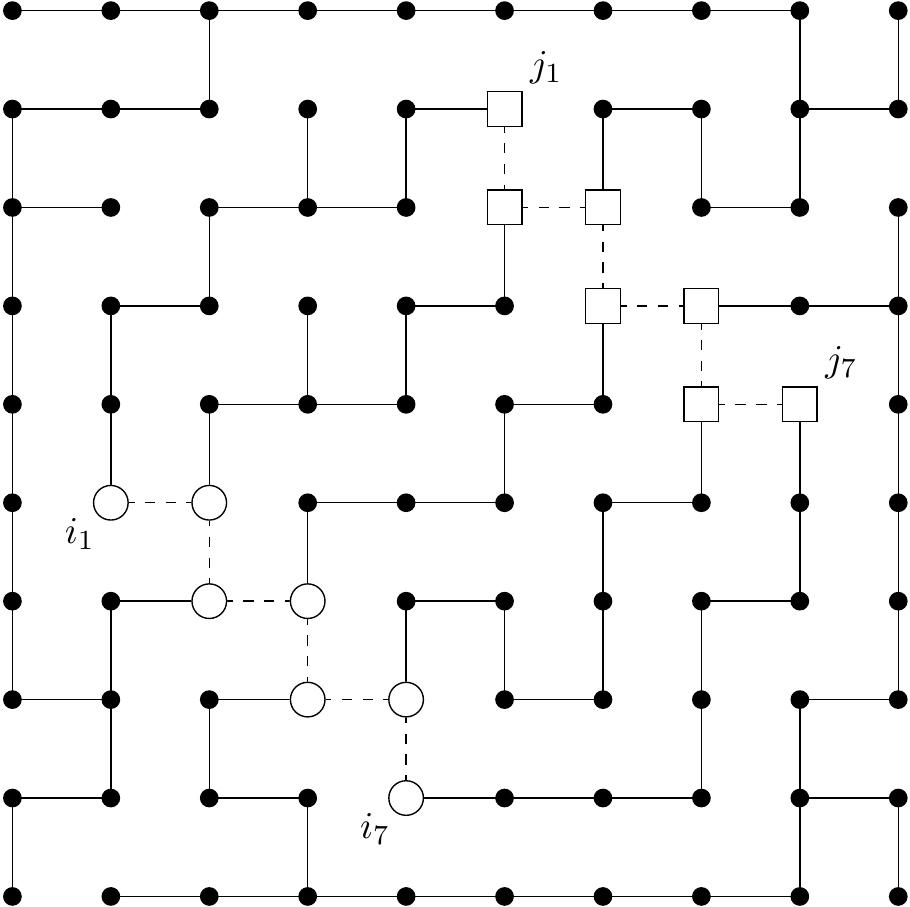}}
 \caption{Watermelon in the bulk of the square lattice with $k=7$ legs. The root set $J=\{i_1\dots,i_7\}$ ($\circ$) and  endpoint set $J=\{j_1,\dots,j_7\}$ ($\openbox$) are zigzag-like fences shown by circles connected with the dashed lines. Thick black lines show the corresponding LERWs.} \label{fig: fences}
\end{figure}

Another thing, which is yet to be specified when the infinite lattice is considered, is  how the infinite lattice limit is reached. Below we consider  probabilities of events within SF on an infinite lattice $\mathcal{L}$ obtained as  limits of  probabilities of events associated with  forests on its finite  subsets from an exhausting sequence $\{\mathcal{L}_n\}_{n\in \mathbb{N}}$ of finite subsets of $\mathcal{L}$. More precisely, let $\mathcal{L}_1\subset\mathcal{L}_2\subset\dots\subset\mathcal{L}$ and $\bigcup_{i\in\mathbb{N}}\mathcal{L}_i=\mathcal{L}$, where each subset $\mathcal{L}_n$ has its own boundary $\partial_n$, possibly going away to infinity in the limit $n\to\infty$. Then we say that the limiting measure (in the sense of weak convergence) exists if the probabilities  of local events  converge and that it is unique if it does not depend on the way the limit is taken, i.e. neither on the sequence $\{\mathcal{L}_n\}$ nor on BC on the  subsets of this sequence. The example is the USF measure on the infinite square lattice, as well as other graphs, where the random walk is recurrent. The existence and uniqueness of this measure are proved in~\cite{BenjaminiLyonsPeresSchramm2001}.

Though the existence of a watermelon is not a local event, it can be formalized as a countable union of local events, each having a probability assigned via the same     limiting procedure. Specifically, in our case, the probability of a local event, say denoted by $A$, is defined as a limit 
\begin{equation}\label{eq: limit prob}
 \mathbb{P}_{\mathcal{L}}^{I\cup\partial}(A)  = \lim_{n\to\infty} \mathbb{P}_{\mathcal L_n}^{I\cup\partial_n}(A)	
\end{equation}
taken over the exhausting sequences $\{\mathcal{L}_n\}_{n\geqslant1}$ of connected  lattice subsets of $\mathcal{L}$. In the case of closed BC, $\mathcal{L}=\mathcal{L}^\mathrm{cl}$ has the (extended) open boundary $I\bigcup \partial_n$, where the set $\partial_n=\partial_n^\mathrm{cl}$ consists of  the sites of $\mathcal{L}_n$ connected to sites of $\mathcal{L}$ outside of $\mathcal{L}_n$. In the case of open BC, $\mathcal{L}=\mathcal{L}^\mathrm{op}$, and we need to include the part of the lowest row $\partial^\mathrm{op}\bigcup\mathcal{L}_n$ to the boundary $\partial_n=\partial_n^\mathrm{op}$, see Figure~\ref{fig: boundary watermelons}. The symbol $\partial$ in the l.h.s. of (\ref{eq: limit prob}) is the limiting boundary, which  is either $\partial^\mathrm{op}$ or empty for $\mathcal{L}=\mathcal{L}^\mathrm{op}$ and $\mathcal{L}=\mathcal{L}^\mathrm{cl}$, respectively, and we omit the dependence on the boundary at infinity in the infinite lattice notations implying the described procedure. Such defined BC at infinity are often referred to as wired BC.  Note that the arguments of~\cite{BenjaminiLyonsPeresSchramm2001}  based on the recurrence are
  applicable to our case and   suggest that the  limit exists and does not depend on the choice of boundary conditions at infinity, i.e. is not limited to the wired BC choice for finite sub-graphs.

\begin{figure}[ht]
 \centerline{\centerline{\includegraphics[width=0.9  \textwidth]{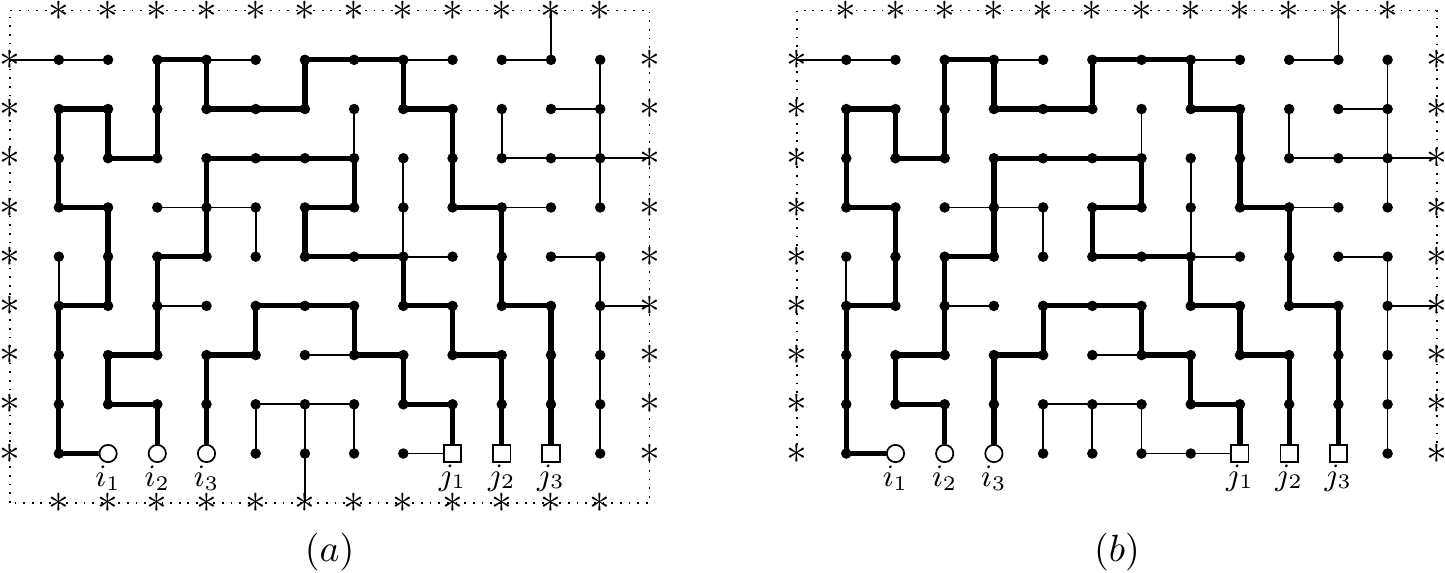}}}
 \caption{Watermelons with $k=3$ legs in  rectangular subsets $\mathcal{L}_n$ of the square  half-lattice with the open (a) and closed (b) BC at the lowest row and open boundary separating $\mathcal{L}_n$ from the rest of the half-lattice $\mathcal{L}$. The root sets $I=\{i_1,i_2,i_3\}$ ($\circ$) and endpoint set $J=\{j_1,j_2,j_3\}$ ($\openbox$) are strings of three sites in the lowest non-boundary row. The boundary sites are shown by asterisks (*) connected with the dotted line. Thick black lines show the corresponding LERWs.\label{fig: boundary watermelons}}
\end{figure}
 
In this way, we assign a probability to a subset of infinite SF configurations $\mathfrak{F}_{B}(IJ)$ containing $k$ paths  connecting sites of $I$ and $J$ inside a finite rectangular box $B$. Then, considering an exhaustion $B_1\subset B_2 \subset\dots\subset \mathcal{L}$ such that $\bigcup_{i\in\mathbb{N}}B_n=\mathcal{L}$, we define the probability of a watermelon on the infinite lattice as a limit
\[
 \mathbb{P}_{\mathcal{L}}^{I\cup\partial}(\mathfrak{F}(IJ|\partial)) =
 \lim\limits_{n\to\infty} \mathbb{P}_{\mathcal{L}}^{I\cup\partial}(\mathfrak{F}_{B_n}(IJ)).
\]
The latter limit exists, for the sequence is bounded and non-decreasing. Moreover, as it follows from the further explicit calculation, the limit is unique, i.e. does not depend on the exhaustion.

The standard approach would consist of two steps: first, to approximate the infinite lattice by a sequence of finite lattices, and second, to approximate an infinite  watermelon by a sequence of finite watermelons. However, instead, it is enough to consider a single diagonal sub-sequence which obviously converges to the same limit,    
\[
 \mathbb{P}_{\mathcal{L}}^{I\cup\partial}(\mathfrak{F}(IJ|\partial)) =
 \lim_{n\to\infty} \mathbb{P}_{\mathcal{L}_n}^{I\cup\partial_n}(\mathfrak{F}_{\mathcal{L}_n}(IJ|\partial_n)).
\]
This is the limit to be studied below.

Under the above definitions, the main result of the article is as follows.

\begin{theorem} \label{th: watermelon probabilities asymp}
 Consider the USF, i.e. $w(e)=1$ for any $e\in E$, on $\mathcal{L}^\mathrm{op}$ and $\mathcal{L}^\mathrm{cl}$ with additional $k$-site root-set $I$ and the set $J$ defined as in~\eref{eq: I,J}. Then, as $r\to\infty$, the probability of watermelon configurations asymptotically satisfies
 \begin{equation} \label{eq: open}
  \mathbb{P}_{\mathcal{L}^\mathrm{op}}^{I\cup\partial^\mathrm{op}}
   \big(\mathfrak{F}(IJ|\partial^\mathrm{op})\big) =
  C^\mathrm{op}_k\cdot r^{-k(k+1)}(1+o(1)),
 \end{equation}
 and  
 \begin{equation} \label{eq: closed}
  \mathbb{P}_{\mathcal{L}^\mathrm{cl}}^{I}
   \big(\mathfrak{F}(IJ)\big) =
  C^\mathrm{cl}_k\cdot r^{-k(k-1)}(1+o(1)),  
 \end{equation}
 respectively, where $C_k^\mathrm{op}$ and $C_k^\mathrm{cl}$ are constants defined below, see~\eref{eq: C_k^op} and~\eref{eq: C_k^cl}.
\end{theorem}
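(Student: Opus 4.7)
I would start by applying the all-minors version of the Kirchhoff matrix-tree theorem — the subject of Section~\ref{sec: Method of solution in general} — to rewrite both the numerator $Z_{\mathcal{L}}(IJ|\partial)$ and the denominator $Z_{\mathcal{L}}(I|\partial)$ of the ratio in (\ref{eq: watermelonprob}) as $k\times k$ determinants whose entries are linear combinations of the half-lattice Green function $G(x,y)$ evaluated between points of $I$ and $J$ (with appropriate subtractions encoding the fact that the sites of $I$ are roots). The denominator is independent of $r$ and therefore contributes only to the non-universal prefactors $C_{k}^{\mathrm{op,cl}}$; the entire $r$-dependence is carried by a single determinant of differences of Green-function values, which I would evaluate asymptotically.

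The second step is to unfold the half-lattice Green function onto the full plane via the image method: for the closed BC one uses the symmetric extension $G^{\mathrm{cl}}(x,y)=G(x,y)+G(x,y^{\ast})$ across the reflected boundary row, and for the open BC the antisymmetric one $G^{\mathrm{op}}(x,y)=G(x,y)-G(x,y^{\ast})$, with $y^{\ast}$ the mirror of $y$. Because the sites of $I$ and $J$ sit in the row adjacent to the reflecting row, the two choices produce drastically different short-distance structures of the kernel, which is precisely what separates the two exponents $k(k+1)$ and $k(k-1)$: in the open case each matrix entry acquires an extra power of the small height above the boundary, while in the closed case the images reinforce the bulk behavior. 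I would then substitute the large-distance asymptotic expansion of the $\mathbb{Z}^{2}$ Green function into each matrix entry, retaining enough subleading terms to permit $k$ rounds of cancellation inside the determinant, and extract the leading order by a Cauchy–Binet expansion that isolates a Vandermonde-type factor coming from the small horizontal separations between neighboring sites within $I$ (and within $J$). The interplay between the per-entry decay rate and this Vandermonde factor yields the claimed $r^{-k(k\pm1)}$, and the corresponding coefficients assemble into $C_{k}^{\mathrm{op}}$ and $C_{k}^{\mathrm{cl}}$ to be specified in (\ref{eq: C_k^op}), (\ref{eq: C_k^cl}).

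The principal obstacle, I expect, is the careful bookkeeping of divergences in the closed-BC case. There no Dirichlet sink is present, so the infinite-plane Green function is defined only up to a formally divergent additive constant, and this divergence enters every entry of the determinant. Handled naively it generates a spurious $\log r$ factor of the kind discussed in the paragraph preceding Remark~\ref{rem} of the Introduction. The correct prescription is the limiting procedure (\ref{eq: limit prob}): one must compute both determinants on a finite exhausting sub-lattice, organize the divergent pieces so that they cancel coherently between numerator and denominator, and only then send the lattice size to infinity. Showing that this cancellation is exact and leaves a pure power law with a finite computable coefficient is the crux of the argument; once established, it reduces the theorem — both for open and (with infrared-divergences absent) open BC, and for the more delicate closed BC — to the algebraic Cauchy–Vandermonde identity mentioned above, whose detailed execution occupies Sections~\ref{sec: Counting the determinants} and~\ref{sec: Isotropic Watermelons}.
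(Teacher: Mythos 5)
Your plan follows the paper's own route essentially step for step: the all-minors Kirchhoff theorem combined with Jacobi's complementary-minor formula reduces the probability to $\det G^I_J/\det G^I_I$, the image method yields the $1/(\pi r^2)$ versus $2G_{0,0}-\pi^{-1}\ln r$ entry asymptotics for open versus closed BC, and a Cauchy--Binet/Schur-function expansion of the determinant produces the Vandermonde factors and the exponents $k(k+1)$ and $k(k-1)$, with the divergent constant in the closed case entering as a rank-one perturbation whose leading contributions cancel between numerator and denominator exactly as you anticipate. The only point you gloss over is that one must check that planarity of the near-boundary configuration forces the identity pairing to be the sole non-vanishing term in the all-minors sum, so that the determinant equals the single partition function $Z_{\mathcal{L}}(IJ|\partial)$ rather than a signed sum over permutations.
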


\subsection{Discussion of the results}

Let us first compare the exponents obtained with the predictions of CG theory and CFT that can be found in~\cite{Duplantier1986, DuplantierSaleur1986, DuplantierSaleur1987}.  
It is known, see \cite{Duplantier1989, Nienhuis1987} for review, that the $\Or(n)$  loop  model can be mapped to the SOS model that renormalizes into a Gaussian free field theory governed by the action  
\[
 A = \frac{g}{4\pi} \int (\partial\varphi)^2\rmd^2 x
\]
with coupling constant $g$ related to the loop weight
\[
 n = -2\cos \pi g.
\]
Here $g\in [1,2]$ corresponds to the critical point (dilute phase) of $\Or(n)$ loop model and $g\in[0,1]$ to the low-temperature regime (dense phase). Hence, the $L$-leg watermelon correlation function $G_L(\bi{r})$, i.e. suitably normalized partition function of loop configurations with $L$ polymers connecting two fixed endpoints separated by a~vector $\bi{r}$, asymptotically behaves as  
$G_L(\bi{r})\asymp |\bi{r}|^{-2x_L}$
when $|\bi{r}|$ is large. According to~\cite{Duplantier1986, DuplantierSaleur1986, DuplantierSaleur1987}, the critical exponent for the watermelon in the bulk of infinite system is
\[
 x_L^{b} = \frac{g L^2}{8}-\frac{(1-g)^2}{2g}.
\]
For the surface exponents describing the same correlation function near the boundary of the half-plane, there is a choice of exponents corresponding to different possible fixed points of the renormalization group.  We mention only the exponent 
\[
 x_L^{s} = \frac{g L^2}{4}+\frac{L(g-1)}{2}
\]
that describes the so-called ordinary phase transition  corresponding to the Dirichlet BC for the height field of the associated SOS model, which suggests that the polymers of the $\Or(n)$ model are reflected from the boundary~\cite{Jacobsen2009}. 

In these exponents, one can also recognize conformal weights associated with the CFT with central charge 
\[
 c = 1-\frac{6(1-g)^2}{g}
\]
given by the Kac formula 
\[
 h_{p,q} = \frac{((m+1)p-mq)^2-1}{4m(m+1)},
 \quad m\in\mathbb{N},
\]
where $p$ and $q$ are co-prime integers. For the dilute and dense loop phases, one finds
\[
 m = \frac{1}{g-1}, \quad
 x_L = 2h_{L/2,0}, \quad
 x_L^s = h_{L+1,1}
\]
and
\[
 m = \frac{g}{1-g}, \quad
 x_L = x_L=2h_{0,L/2}, \quad
 x_L^s = h_{1,L+1},
\]
respectively, though the appearance of half-integer indices in the bulk case is yet to be  understood.  

Our case is to be compared with the dense phase of $n=0$ limit of $\Or(n)$ corresponding to $g=1/2$ and   
\begin{equation}\label{eq: Polymer exponents}
 x_L^b = \frac{L^2}{16}-\frac{1}{4}, \quad
 x_L^s = \frac{L^2}{8}-\frac{L}{4}. 
\end{equation}

In order to compare these exponents with our results, we first note that ST or SF on the square lattice can be mapped to configurations of the $n=0$ version of $\Or(n)$ model, the dense  polymer model on the medial lattice~\cite{PearceRasmussen2007, BrankovGrigorevPriezzhevTipunin2008} that is the 45-degree rotated  square lattice with sites associated with bonds of  the original square lattice, see Figure~\ref{fig: polymers}. Under this mapping, every component of  SF  is surrounded by a loop or, equivalently, is embedded between a pair of polymers. Thus, a $k$-leg SF watermelon corresponds to the polymer watermelon with $L=2k$ legs. Indeed, the exponent~$x_{2k}^b$ is the one obtained in~\cite{Kenyon2000} and in~\cite{IvashkevichHu2005, GorskyNechaevPoghosyanPriezzhev2013}, where the power law was corrected with the logarithmic prefactor. 
\begin{figure}[ht]
 \centerline{\centerline{\includegraphics[width=0.9  \textwidth]{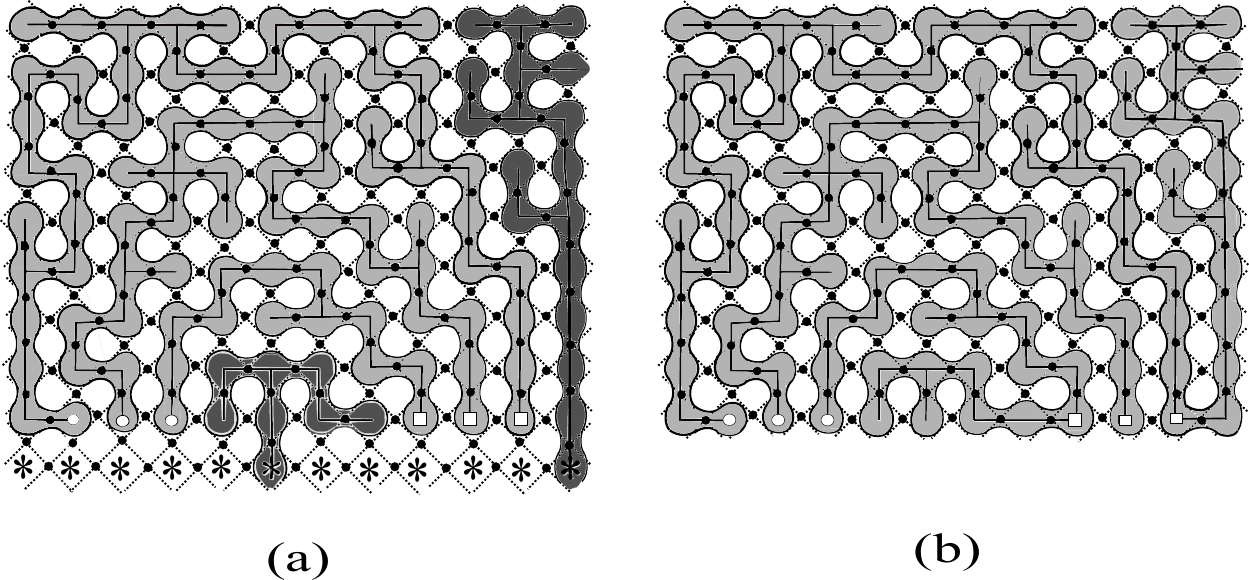}}}
 \caption{Correspondence between SF 3-leg watermelons on the subsets of semi-infinite square lattices  with open (a) and closed (b) boundary conditions and dense polymer configurations on the medial lattice. Every SF component is enveloped by the polymer loop. The area inside loops enveloping 3 watermelon components and connecting a root ($\circ$) with the corresponding endpoint ($\openbox$) is light gray shaded. Every such a loop is to be considered as a pair of polymers constituting 6-leg watermelon in the polymer picture. 
 In the closed BC case (b), the polymers are reflected from the boundary. Thus,  the value of the height function assigned to faces of the medial  lattice, which  changes by one on every polymer crossing,  stays constant along the boundary maintaining the Dirichlet BC.    In the case of open boundary (a), other (dark gray shaded) components   rooted to boundary roots ($*$)  may exist. Corresponding polymers go away from the lattice causing varying the height function along the boundary.  }
 \label{fig: polymers}	
\end{figure}  

To identify the surface exponent $x_{2k}^s$ with the  exponents obtained, we note that the Dirichlet BC for the height field,\footnote{ To define the height function assigning integer values to faces of the medial lattice, we first consider an oriented polymer configuration by giving an orientation to the loops. In this case, the directed loop configuration determines the height function by the condition that the value of the function increases (decreases) by one whenever we cross left- (right-) oriented polymer, when going between two neighboring faces. Given a SF on a restricted domain with closed BC, we obtain a loop configuration, were polymers are reflected from the boundary (Figure~\ref{fig: polymers}b). If we think of the domain as a part of an infinite lattice, then the corresponding height function can be consistently fixed to be constant outside the domain, i.e. satisfying Dirichlet BC, since no polymer is crossed when going around the domain. This is not true for loop configurations obtained from SF with open BC, where presence of extra  rooted SF components results in sources and sinks of polymers at the boundary (Figure~\ref{fig: polymers}a). See~\cite{Jacobsen2009} for details.}
which suggest that the polymers are reflected from the boundary~\cite{Jacobsen2009}, correspond to the closed (Neumann) BC for SF. Indeed, substituting $L=2k$ to $x_{L}^s$ from~\eref{eq: Polymer exponents}, we obtain power law~\eref{eq: closed}.


To explain the origin of the  exponent in~\eref{eq: open} for the  the half-lattice with open BC, we note that  SF on a finite domain of the square lattices  with open and closed BC are dual to each other, see e.g. \cite{KarrilaKytolaPeltola2020}. More specifically, consider  a simply connected finite domain $G$ of the square lattice with closed BC and its dual domain $G'$ of the dual lattice with open BC, so that the sites of $G'$  are associated with faces of $G$  including  site $(*)$ called  the root (open boundary) associated with the  external face of $G$. Then,  the set of unrooted  ST on $G$   is in bijection with the set of dual ST rooted to $(*)$. This correspondence can obviously be promoted to that between SF sets  on the half-infinite lattices $\mathcal{L}^{\mathrm{op}}$ and $\mathcal{L}^{\mathrm{cl}}$. 
Furthermore, one can see that a similar bijection holds between the sets $\mathfrak{F}(I_kJ_k|\partial^\mathrm{op})$ and  $\mathfrak{F}(I_{k+1}J_{k+1})$  of $k$-watermelons on  $\mathcal{L}^{\mathrm{op}}$ and $(k+1)$-watermelons on $\mathcal{L}^{\mathrm{cl}}$, respectively, where the subscripts show the cardinality of sets $|I_k|=|J_k|=k$ (to see this, look at the light gray areas at Figure~\ref{fig: polymers} and their complementary white areas). This is the reason, why the exponent of~\eref{eq: open} can be obtained from that of~\eref{eq: closed} by shift $k\to (k+1)$ as well as the numerator of $C_{k+1}^{\mathrm{op}}$ (see formulas (\ref{eq: C_k^op}) and (\ref{eq: C_k^cl}) below). Note that there is no such a duality between the whole sets $\mathfrak{F}(I_k|\partial^\mathrm{op})$ and  $\mathfrak{F}(I_{k+1})$ of~SF with~$k$ and $(k+1)$  additional components, respectively. Hence, the normalization factors contributing to the denominators of $C_{k+1}^{\mathrm{op}}$ and $C_{k}^{\mathrm{cl}}$ are still different.

Our result can also be compared with the result of Kenyon~\cite{Kenyon2000} for the crossing exponent, which defines the crossing probability decay rate on the rectangle. Under the conformal map of the rectangle to the semi-annulus, the exponential decay of~\cite{Kenyon2000} becomes the power law~\eref{eq: closed}, see also~\cite{Cardy1984} for the relation between critical exponents and amplitude of the correlation length in finite size systems.

\section{Kirchhoff theorem and Green function}\label{sec: Method of solution in general}

\subsection{Matrix tree theorem}
In this section, we describe a general approach for evaluating the watermelon probability mainly following~\cite{Poncelet2018}. Let us consider a finite directed connected graph $\mathcal{G}=(V,E)$  without self-loops and multiple edges. Let $N$ be the number of its non-boundary vertices, $N=|V\setminus\partial|$, and $\Delta=(\Delta_{ij})_{i,j\in V\setminus\partial}$ be an $N\times N$ matrix whose elements are defined by the formula
\begin{equation}\label{formula: graph matrix}
	\Delta_{ij}=\left\{\begin{array}{cl}
		\sum\limits_{k\ne j}w(jk) & \quad\mbox{if} \;\; i=j\\
		-w(ji) & \quad\mbox{if} \;\; i\ne{j}.
	\end{array}\right.
\end{equation}
Here we suppose that the weight $w(ij)=0$ if $ij\notin E$. Thus, $-\Delta_{ij}$ is a weight of a~directed edge from site $j$ to site $i$ for different values of $i$ and $j$, and $\Delta_{ii}$ is the sum of the weights of directed edges from site~$i$. In particular, the sum of $j$-th column elements is equal to the sum of the weights of  edges directed from $j$ to $\partial$,
\[
 \sum\limits_{i=1}^{N}\Delta_{ij} = 
 \sum\limits_{k\in\partial}w(jk).
\]
Note that directed graphs with multiple edges can be considered in the same frame if one puts $w(ij)$ to be the sum of all weights of directed edges from $i$ to $j$. Undirected graphs suit this frame as well: it is sufficient to take $w(ij)=w(ji)$ for all $i\ne j$.

Recall the Matrix Tree Theorem which, being stated in a slightly different way, was first proved by Kirchhoff.

\begin{theorem}[Matrix Tree Theorem,~\cite{Kirchhoff1847}]\label{thm: Kirchhoff}
Let the graph $\mathcal{G}$ with boundary $\partial$ be as defined above, and let $\Delta$ be the matrix of its discrete Laplacian. Then we have
 \[
  Z_{\mathcal{G}}(\partial) = \det\Delta.
 \]
\end{theorem}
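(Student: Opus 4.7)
The plan is to expand $\det \Delta$ via the Leibniz formula and then exhibit a sign cancellation that keeps only the terms indexed by spanning forests rooted to $\partial$. It is convenient to work with $\Delta^T$, which has the same determinant and the nicer property that row $i$ involves only outgoing edges from $i$: $(\Delta^T)_{ii} = \sum_{k \neq i} w(ik)$ and $(\Delta^T)_{ij} = -w(ij)$ for $i \neq j$.

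First I would write
\[
 \det \Delta = \sum_{\sigma \in S_{V\setminus\partial}} \mathrm{sgn}(\sigma) \prod_{i \in V \setminus \partial} (\Delta^T)_{i,\sigma(i)}
\]
and expand each diagonal factor coming from a fixed point of $\sigma$ as a sum over outgoing edges (possibly to $\partial$). Each resulting term is naturally parametrized by a functional digraph $g : V\setminus\partial \to V$, read off by letting $i \to g(i)$ be the outgoing edge selected at $i$, together with a subset $T \subseteq V\setminus\partial$ which must be a union of cycles of $g$, with $\sigma = g|_T$ extended by the identity on $V\setminus\partial \setminus T$.

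Next, I would carry out the sign bookkeeping. The sign of such a $\sigma$ is $(-1)^{|T|-c(T)}$, where $c(T)$ is the number of cycles of $g$ inside $T$; combined with the $(-1)^{|T|}$ coming from the off-diagonal entries $(\Delta^T)_{i,\sigma(i)} = -w(i,\sigma(i))$, the net sign is $(-1)^{c(T)}$. The total contribution of a fixed functional digraph $g$ is therefore
\[
 \left(\prod_{i \in V\setminus\partial} w(i,g(i))\right) \sum_{S \subseteq \mathrm{Cyc}(g)} (-1)^{|S|},
\]
where $\mathrm{Cyc}(g)$ is the set of cycles of $g$. This alternating sum equals $(1-1)^{|\mathrm{Cyc}(g)|}$, hence vanishes unless $g$ has no cycles, in which case it equals $1$.

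Finally, functional digraphs with no cycles are precisely those in which every orbit flows into $\partial$, i.e.\ elements of $\mathfrak{F}(\partial)$ with the natural orientation toward the roots. Summing their weights recovers $Z_{\mathcal{G}}(\partial)$ and concludes the proof. The main obstacle is the sign bookkeeping: one has to match $\mathrm{sgn}(\sigma)$ against the $(-1)$'s from the off-diagonal entries cleanly enough to expose the factor $(-1)^{c(T)}$; once this is done, the binomial collapse is immediate. An alternative route would be Cauchy–Binet applied to a factorization $\Delta = B W B^{\top}$ with $B$ the signed incidence matrix to the extended graph having $\partial$ glued to a single root, which is slicker but hides the spanning-forest/functional-digraph dichotomy that makes the argument transparent and that will be exploited later for the all-minors generalization announced in the abstract.
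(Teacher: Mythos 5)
The paper does not actually prove Theorem~\ref{thm: Kirchhoff}: it is quoted as a classical result with a citation to Kirchhoff, so there is no in-paper argument to compare against. Your proof is correct; it is the standard Leibniz-expansion / cycle-cancellation argument, and both the sign bookkeeping and the binomial collapse go through. Two points your sketch leaves implicit but that do check out with the paper's conventions: (i) since $g$ is defined only on $V\setminus\partial$ and the diagonal entry $\Delta_{jj}=\sum_{k\neq j}w(jk)$ sums over \emph{all} neighbours including boundary ones, every cycle of $g$ lies entirely inside $V\setminus\partial$, so the alternating sum really ranges over all of $\mathrm{Cyc}(g)$ and $(1-1)^{|\mathrm{Cyc}(g)|}$ is the correct collapse; (ii) for a cycle-free $g$, each weakly connected component of the functional digraph is a tree whose edge count equals its number of non-boundary vertices, hence contains exactly one vertex of $\partial$ --- this is precisely the paper's definition of a forest in $\mathfrak{F}(\partial)$, and $\prod_i w(i,g(i))=W(\mathcal{F})$ because the paper sets $w(ij)=w(ji)$ for undirected graphs. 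The sign computation $\mathrm{sgn}(\sigma)\cdot(-1)^{|T|}=(-1)^{|T|-c(T)}\cdot(-1)^{|T|}=(-1)^{c(T)}$ is right. Your closing remark is also apt: this functional-digraph expansion is the one that extends to the all-minors statement of Theorem~\ref{theorem: All Minor Matrix Tree Theorem}, which the paper likewise only cites, whereas the Cauchy--Binet factorization does not generalize as transparently.
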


The main tool for our investigation is the All Minors Matrix Tree Theorem which is a generalization of the above theorem. For  $I\subset V \backslash \partial$, denote $\bar{I}=V\setminus (I\cup\partial)$ and define $(-1)^{\Sigma I}$ to be the sum of indices within $I$ with respect to a linear order of indices within $V$. Additionally, for $I,J\subset V\backslash \partial$, define $\Delta^J_I$ to be the restriction of matrix $\Delta$ to the rows indexed by the vertices of $J$ and to the columns indexed by the vertices of $I$. Also, for a permutation $\sigma\in S_k$, denote
\[
 \mathfrak{F}(I\sigma(J)|\partial) \equiv
 \mathfrak{F}(i_1j_{\sigma(1)}|\dots|i_kj_{\sigma(k)}|\partial)
\]
the set of forests such that for any $\mathcal{F}\in\mathfrak{F}(I\sigma(J)|\partial)$:
\begin{itemize}
 \item
  every component of $\mathcal{F}$ is rooted to $I\cup\partial$,
 \item
  for every $l=1,\ldots,k$, the site $j_{\sigma(l)}$ belongs to the component rooted to $i_l$.
\end{itemize}
Then the following theorem holds.

\begin{theorem}[All Minor Matrix Tree Theorem,~\cite{Chen1976, Chaiken1982}]\label{theorem: All Minor Matrix Tree Theorem}
	Let $|I|=|J|=k$. Then
\begin{equation}
	\det\Delta^{\bar{J}}_{\bar{I}} = (-1)^{\Sigma I+\Sigma J}\sum\limits_{\sigma\in S_{k}}(-1)^{|\sigma|}Z_{\mathcal{G}}(I\sigma(J)|\partial),
	\label{eq: All Minor Matrix Tree Theorem}
	\end{equation}
	where the sum runs over the symmetric group on a set of  $k$ elements, and $Z_{\mathcal{G}}(I\sigma(J)|\partial)$ is the partition function of $\mathfrak{F}(I\sigma(J)|\partial)$.	
\end{theorem}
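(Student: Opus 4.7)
My plan is to establish Theorem~\ref{theorem: All Minor Matrix Tree Theorem} by a direct combinatorial expansion of $\det\Delta^{\bar J}_{\bar I}$, followed by a sign-reversing involution that cancels cycle-bearing configurations, extending the standard proof of the classical Matrix Tree Theorem (Theorem~\ref{thm: Kirchhoff}). The prefactor $(-1)^{\Sigma I+\Sigma J}$ should arise from the sign of the minor with respect to the natural ordering on $V\setminus\partial$, and the signed sum over $\sigma\in S_k$ should emerge from enumerating the ways that endpoints $j_l\in J$ can be attached to roots $i_l\in I$ through tree components.

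Concretely, I would first apply the Leibniz formula to write
\[
 \det\Delta^{\bar J}_{\bar I} = (-1)^{\Sigma I+\Sigma J}\sum_\pi (-1)^{|\pi|}\prod_{v\in\bar I}\Delta_{\pi(v),v},
\]
where, after absorbing the ordering sign, $\pi$ ranges over bijections $\bar I\to\bar J$. For each $v\in\bar I$ with $\pi(v)=v$ (possible only for $v\in\bar I\cap\bar J$), expand the diagonal entry $\Delta_{vv}=\sum_{k\ne v}w(vk)$ as a sum over outgoing edges at $v$. Every non-zero term in the combined expansion is then the weight of a ``functional digraph'' in which every $v\in\bar I$ has one outgoing edge $(v,f(v))$ with $f(v)\in V$. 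Since $I\cup\partial$ has no column in the minor, vertices in $I\cup\partial$ carry no outgoing edges and serve as sinks; each configuration therefore decomposes into tree components oriented toward $I\cup\partial$, together with possibly directed cycles. A quick check shows these cycles must lie entirely inside $\bar I\cap\bar J$: indeed, for $v\in J\subset\bar I$ one has $\pi(v)\ne v$, so the chosen outgoing edge has head in $\bar J$, forcing any cycle through $v\in J$ to send $v$ outside $J$; iterating this constraint around a cycle shows no $J$-vertex (and similarly no $I$-vertex) can lie on it.

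The heart of the argument is the sign-reversing involution on cycle-bearing configurations: pick the cycle through the smallest-indexed vertex and reverse its orientation. Since the graph is undirected, $w(uv)=w(vu)$, so the weight product is preserved; the parity of $\pi$ flips, so these configurations cancel in pairs. What survives are precisely the acyclic configurations, i.e.\ the spanning forests rooted at $I\cup\partial$ in which every $j_l\in J$ lies in the tree rooted at some $i_{\sigma(l)}\in I$. Grouping the surviving forests by the induced permutation $\sigma\in S_k$, and tracking the minus signs from the off-diagonal $-w(vu)$ entries together with the Leibniz sign of $\pi$, the sign associated with each forest in $\mathfrak{F}(I\sigma(J)|\partial)$ collapses to $(-1)^{|\sigma|}$, yielding the identity~(\ref{eq: All Minor Matrix Tree Theorem}).

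The main obstacle I anticipate is the sign bookkeeping in this last step: demonstrating that the combined effect of the off-diagonal minus signs, the Leibniz sign of $\pi$ read along the tree orientations, and the cycle-reversal involution reduces, over each surviving forest, to the single factor $(-1)^{|\sigma|}$ (independent of the internal structure of the trees). I would first verify the cases $k=0$ (recovering Theorem~\ref{thm: Kirchhoff}) and $k=1$ (a single-path generating function, where $S_1$ is trivial) to pin down conventions. A cleaner equivalent approach, which I would use as a cross-check, is Cauchy--Binet: factorize $\Delta$ through the signed vertex-edge incidence matrix $B$ and the diagonal weight matrix $W$, apply Cauchy--Binet to the minor $\Delta^{\bar J}_{\bar I}$ to express it as a sum over edge subsets of size $|\bar I|$, and classify the non-vanishing incidence sub-determinants to read off the forest structure and the permutation sign directly.
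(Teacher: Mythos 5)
The paper does not prove this statement at all: it is quoted as a known result with citations to Chen and to Chaiken, and is used as a black box. So your proposal can only be judged on its own merits, and while the overall architecture (Leibniz expansion of the minor, expansion of diagonal entries into outgoing edges, interpretation as functional digraphs with sinks at $I\cup\partial$, cancellation of cycle-bearing terms, grouping of survivors by the induced pairing $\sigma$) is the standard and correct route, the central step — your sign-reversing involution — does not work as described. Reversing the orientation of a cycle of $\pi$ of length $\ell$ replaces it by its inverse cycle, which is again an $\ell$-cycle and hence has the \emph{same} parity; the parity of $\pi$ does not flip, and for $\ell=2$ the reversal is not even a fixed-point-free map. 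The same objection applies when the cycle consists of diagonal-expansion edges, where $\pi$ is untouched by the reversal. So cycle-bearing configurations do not cancel in the pairs you propose.

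The correct involution is a different toggle. First note that, by injectivity of $\pi$, the edges of any directed cycle in the functional digraph are either all off-diagonal (the cycle is a cycle of $\pi$, necessarily contained in $\bar I\cap\bar J$) or all diagonal choices (all its vertices are fixed points of $\pi$); this is also the clean way to justify your claim that cycles avoid $I$ and $J$ — your ``iterate around the cycle'' argument only controls the successor of a $J$-vertex, not whether such a vertex can receive a diagonal-expansion edge. The involution then takes the canonically first cycle and switches its realization between these two modes: as a $\pi$-cycle it contributes $(-1)^{\ell-1}$ from the permutation sign times $(-1)^{\ell}$ from the $\ell$ entries $-w$, i.e. an overall $-1$, whereas as a chain of diagonal choices it contributes $+1$ with the same edge-weight product. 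This cancels all cyclic terms, requires no symmetry $w(uv)=w(vu)$ (consistent with the theorem holding for directed graphs, as the paper's setup makes clear), and leaves exactly the forests in $\bigcup_\sigma\mathfrak{F}(I\sigma(J)|\partial)$. The remaining sign bookkeeping you flag, reducing the residual sign of each surviving forest to $(-1)^{|\sigma|}$ after extracting $(-1)^{\Sigma I+\Sigma J}$, is indeed the tedious part; your Cauchy--Binet cross-check via the incidence-matrix factorization is a sound independent way to pin it down.
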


When the matrix $\Delta$ is invertible, which is always the case by Theorem \ref{thm: Kirchhoff} for $\partial\neq \emptyset$, in addition to Theorem~\ref{theorem: All Minor Matrix Tree Theorem} we can use the following formula.

\begin{proposition}[Jacobi's complementary minor formula,~\cite{Gantmacher1959}]\label{prop: Jacobi's complementary minor formula}
 Let $\Delta$ be invertible, with  the Green function  $G=\Delta^{-1}$. Then 
 \[
  \det G^I_J = (-1)^{\Sigma I+\Sigma J}\cdot\frac{\det\Delta^{\bar{J}}_{\bar{I}}}{\det\Delta}.
 \]
\end{proposition}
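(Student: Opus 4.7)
My plan is to prove Jacobi's identity by reducing it, via a block reordering of $\Delta$, to the determinantal formula for a Schur complement. First, I permute the rows of $\Delta$ so that those indexed by $\bar J$ come before those indexed by $J$ (each part keeping its natural order), and similarly permute the columns so that $\bar I$ precedes $I$. Writing $\tilde\Delta$ for the result, one obtains the block form
\[
 \tilde\Delta=\begin{pmatrix}\Delta^{\bar J}_{\bar I}&\Delta^{\bar J}_{I}\\\Delta^{J}_{\bar I}&\Delta^{J}_{I}\end{pmatrix},
\]
and an elementary inversion count shows that the two permutation signatures multiply to give $\det\tilde\Delta=(-1)^{\Sigma I+\Sigma J}\det\Delta$.

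Next, writing $\tilde\Delta=P\Delta Q$ for suitable permutation matrices $P,Q$, I get $\tilde\Delta^{-1}=Q^{-1}GP^{-1}$; the row permutation $Q^{-1}$ and the column permutation $P^{-1}$, applied to $G$, put the indices $I$ and $J$ into the last $k$ rows and columns respectively, so that the lower-right $k\times k$ block of $\tilde\Delta^{-1}$ is exactly $G^I_J$. Assuming momentarily that $\Delta^{\bar J}_{\bar I}$ is invertible, the block-inversion formula identifies this block as $S^{-1}$, where $S=\Delta^{J}_{I}-\Delta^{J}_{\bar I}(\Delta^{\bar J}_{\bar I})^{-1}\Delta^{\bar J}_I$ is the Schur complement of $\Delta^{\bar J}_{\bar I}$ in $\tilde\Delta$. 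Combining this with the classical determinantal identity $\det\tilde\Delta=\det\Delta^{\bar J}_{\bar I}\cdot\det S$ and step one yields
\[
 \det G^I_J=\frac{1}{\det S}=\frac{\det\Delta^{\bar J}_{\bar I}}{\det\tilde\Delta}=(-1)^{\Sigma I+\Sigma J}\frac{\det\Delta^{\bar J}_{\bar I}}{\det\Delta},
\]
as claimed. The auxiliary invertibility hypothesis on $\Delta^{\bar J}_{\bar I}$ is then removed by a density argument: both sides are rational functions of the entries of $\Delta$ with common denominator $\det\Delta$, so the identity extends from the Zariski-dense open subset where the Schur complement is defined to every invertible $\Delta$.

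The step requiring the most care is the signature computation for $\det\tilde\Delta$. It reduces to the elementary lemma that the permutation moving a $k$-element subset $S\subset\{1,\ldots,n\}$ to the last $k$ positions (preserving the internal orders within $S$ and within $\bar S$) has signature $(-1)^{k(n-k)+k(k+1)/2-\Sigma S}$; applying this with $S=I$ and $S=J$, the exponents $2k(n-k)$ and $k(k+1)$ are both even and drop out, leaving the clean sign $(-1)^{\Sigma I+\Sigma J}$. Everything else is routine bookkeeping with standard block-matrix identities.
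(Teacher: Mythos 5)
Your proof is correct. Note that the paper itself does not prove this proposition at all; it is stated as a classical fact with a citation to Gantmacher, so there is no in-paper argument to compare against. Your Schur-complement derivation is the standard textbook route and every step checks out: the block reordering $\tilde\Delta=\bigl(\begin{smallmatrix}\Delta^{\bar J}_{\bar I}&\Delta^{\bar J}_{I}\\ \Delta^{J}_{\bar I}&\Delta^{J}_{I}\end{smallmatrix}\bigr)$ is consistent with the paper's convention that superscripts index rows and subscripts index columns; the inversion count $k(n-k)+k(k+1)/2-\Sigma S$ for moving a $k$-subset to the last positions is right, and the even terms cancel to leave exactly $(-1)^{\Sigma I+\Sigma J}$; the identification of the lower-right block of $\tilde\Delta^{-1}=Q^{-1}GP^{-1}$ with $G^I_J$ is correct; and the final density (or rational-identity) argument legitimately removes the auxiliary assumption that $\Delta^{\bar J}_{\bar I}$ is invertible, which is needed here since in the application that minor can in principle vanish even when $\det\Delta\neq0$. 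Nothing is missing.
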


Using Theorem~\ref{theorem: All Minor Matrix Tree Theorem} and Proposition~\ref{prop: Jacobi's complementary minor formula} we can finally write the principle formula of our interest. To this end, we note that we work  with finite connected undirected graphs $\mathcal{L}_n=\mathcal{L}_n^\mathrm{op} \subset \mathcal{L}^\mathrm{op}$  
(respectively, $\mathcal{L}_n^\mathrm{cl} \subset \mathcal{L}^\mathrm{cl}$) with the boundary $\partial_n^\mathrm{op}$ (respectively, $\partial_n^\mathrm{cl}$) and   the sets $I$ and $J$ of sites defined in~\eref{eq: I,J}. In both cases, the only non-zero summand of the sum in r.h.s of~\eref{eq: All Minor Matrix Tree Theorem} corresponds to $\sigma=\mathrm{Id}$. Hence, according to~\eref{eq: watermelonprob}, the probability of the watermelon is
\begin{equation}\label{eq: watermelon probability in general}
 \mathbb{P}_{\mathcal{L}_n}^{I\cup\partial_n}\big(\mathfrak{F}_{\mathcal{L}_n}(IJ|\partial)\big)=\frac{Z_{\mathcal{L}_n}(IJ|\partial_n)}{Z_{\mathcal{L}_n}(I|\partial_n)}=\frac{\det (G^n)^{I}_{J}}{\det (G^n)^{I}_{I}}, 
\end{equation}
where $G^n$ is the Green function associated with $\mathcal{L}_n$. Note that, for arbitrary $\mathcal{L}_n$, the explicit formulas of both the numerator and denominator of the fraction in r.h.s. of~\eref{eq: watermelon probability in general} do not exist. However, the local $n\to\infty$ behavior of the Green functions is well known and will be used below to evaluate the $n\to\infty$ limit of the l.h.s. of~\eref{eq: watermelon probability in general}.

\subsection{Green function}
The Green function $G^{n}_{\mathbf{x_1},\mathbf{x_2}}$ associated with the lattice $\mathcal{L}_n$ is the inverse matrix of the corresponding discrete Laplacian, i.e. it satisfies
\begin{equation}\label{formula: equation for G_(x,y)}
 4G^n_{\mathbf{x_1},\mathbf{x_2}} -
  \big(G^n_{\mathbf{x_1}-\mathbf{e}_x,\mathbf{x_2}} +
   	   G^n_{\mathbf{x_1}+\mathbf{e}_x,\mathbf{x_2}} +
       G^n_{\mathbf{x_1}-\mathbf{e}_y,\mathbf{x_2}} +
       G^n_{\mathbf{x_1}+\mathbf{e}_y,\mathbf{x_2}}
  \big) =
 \delta_{\mathbf{x_1},\mathbf{x_2}}
\end{equation}
for any $\mathbf{x_1},\mathbf{x_2}\in \mathcal{L}_n\backslash \partial_n$, supplied with Dirichlet boundary conditions 
\[
 G^{n}_{\mathbf{x_1},\mathbf{x_2}} = 0, \quad
 \forall\,\mathbf{x_1}\in \partial_n,\,
 \forall\,\mathbf{x_2}\in\mathcal{L}_n,
\]
where $\mathbf{e}_x=(1,0)$ and $\mathbf{e}_y=(0,1)$ are the basis  vectors of the lattice. In particular, when $\mathcal{L}_n=\mathcal{L}_n^{\mathrm{op}}$, the Dirichlet boundary conditions on the lowest row of $\mathcal{L}^{\mathrm{op}}$ suggest  
\[
 G^{\mathrm{op},n}_{\mathbf{x_1},\mathbf{x_2}} = 0, \quad
 \forall\,\mathbf{x_1}\in \{(k,0)\}_{k \in \mathbb{Z}},\,
 \forall\,\mathbf{x_2}\in\mathcal{L}_n.
\]
Alternatively, when $\mathcal{L}_n=\mathcal{L}_n^{\mathrm{cl}}$, we impose Neumann boundary conditions on the lowest row, i.e. 
\[
 G^{\mathrm{cl},n}_{\mathbf{x_1},\mathbf{x_2}} = 
 G^{\mathrm{cl},n}_{\mathbf{x_1}-\mathbf{e}_y,\mathbf{x_2}}, \quad
 \forall\,\mathbf{x_1}\in \{(k,1)\}_{k \in \mathbb{Z}},\,
 \forall\,\mathbf{x_2}\in\mathcal{L}_n.
\]

The Green function for the lattice $\mathcal{L}_n$ can be obtained by the image method from the Green function for the lattice $\mathcal{L}_n^{\mathrm{sym}}=\mathcal{L}_n\cup{\mathcal{L}}_n^*$ contained in the infinite lattice~$\mathcal{L}$ with the set of vertices $\mathbb{Z}^2$ extending the original half-lattice. If $\mathcal{L}_n=\mathcal{L}^{\mathrm{op}}$, then the vertex set of the half-lattice is $\mathbb{Z}\times \mathbb{Z}_{\geqslant 0}$ and the additional part $\mathcal{L}^*_n=(V,E)$ is obtained by reflection with respect to the horizontal line $y=0$, meaning that
 \[
  V = \{\mathbf{x}^*=(x,-y):\, \mathbf{x}=(x,y)\in \mathcal{L}_n\}.
 \]
In the case when $\mathcal{L}_n=\mathcal{L}^{\mathrm{cl}}$, the vertex set of the half-lattice is $\mathbb{Z}\times \mathbb{Z}_{>0}$ and the reflection is done with respect to the line $y=1/2$, i.e.
 \[
  V = \{\mathbf{x}^*+\mathbf{e}_y=(x,1-y):\,
      \mathbf{x}=(x,y)\in \mathcal{L}_n\}.
 \]
We also imply that sites  from the lowest row  $\partial^{\mathrm{op}}$, which are boundary sites in $\mathcal{L}_n^{\mathrm{op}}$, are not boundary in the corresponding lattice $\mathcal{L}_n^{\mathrm{sym}}$ anymore, while the other boundary sites from the bulk of $\mathcal{L}^{\mathrm{op}}$ or $\mathcal{L}^{\mathrm{cl}}$, as well as their mirror images, are. An important feature of $\{\mathcal{L}_n^{\mathrm{sym}}\}_{n\in\mathbb{N}}$ is that it is an exhausting sequence of the infinite lattice $\mathcal{L}$. In particular, the boundary of its entries goes away to infinity, as $n\to\infty$.

Using the notation $G^{\mathrm{sym},n}_{\mathbf{x_1},\mathbf{x_2}}$ for the Green function in the symmetric domain $\{\mathcal{L}_n^{\mathrm{sym}}\}_{n\in\mathbb{N}}$, we obtain the Green function for the original subsets,
\begin{eqnarray}
 G^{\mathrm{op},n}_{\mathbf{x_1},\mathbf{x_2}} &=&
 G^{\mathrm{sym},n}_{\mathbf{x_1},\mathbf{x_2}} -
 G^{\mathrm{sym},n}_{\mathbf{x_1},\mathbf{{x}^*_2}},
 	\label{eq: G^(op)_(x,y) through G_(x,y)}\\
 G^{\mathrm{cl},n}_{\mathbf{x_1},\mathbf{x_2}} &=&
 G^{\mathrm{sym},n}_{\mathbf{x_1},\mathbf{x_2}} +
 G^{\mathrm{sym},n}_{\mathbf{x_1},\mathbf{{x}^*_2} +
 \mathbf{e}_y},
 	\label{eq: G^(cl)_(x,y) through G_(x,y)}
\end{eqnarray}
where ${\mathbf{x}^*}=(x,-y)$ for $\mathbf{x}=(x,y)$. Relations~\eref{eq: G^(op)_(x,y) through G_(x,y)} and~\eref{eq: G^(cl)_(x,y) through G_(x,y)} follow from the fact that the second summands in both equations are harmonic functions in $\mathcal{L}^{\mathrm{op}}$ and $\mathcal{L}^{\mathrm{cl}}$, respectively, while the sums manifestly satisfy boundary conditions.  

The Green function has a transparent meaning in the languages of electric circuits and random walks. If we consider the edges of the graph 
as one Ohm resistors, then the value of $G_{\mathbf{x_1},\mathbf{x_2}}$ represents the voltage at the site $\mathbf{x_1}$, given a unit current is injected into the site $\mathbf{x_2}$, while the Dirichlet boundary is grounded.  The value of $G_{\mathbf{x_1},\mathbf{x_2}}$ can also be interpreted as an expected number of visits of the site $\mathbf{x_1}$ before leaving the boundary by the random walk started at $\mathbf{x_2}$.

Given an exhausting sequence of subgraphs of a periodic planar lattice,  the voltage necessary to maintain the unit current from a fixed site to the boundary grows unboundedly, which is a consequence of the recurrence of the random walk in two-dimensions. In particular, this is the case for  $\{\mathcal{L}_n^{\mathrm{sym}}\}_{n\in\mathbb{N}}$, which means that $G^{\mathrm{sym},n}_{\mathbf{x_1},\mathbf{x_2}}\to \infty$, as $n\to \infty$, for any fixed sites $\mathbf{x_1,x_2}$.
 
At the same time, the limiting voltage drop $g_{\mathbf{x_1-x_2}}$ between two sites at a finite distance from each other is finite and well defined, being an increase of the potential kernel of the corresponding random walk. In particular~\cite{Spitzer2001}, for $\mathbf{x}_i=(x_i,y_i)$ with $i=1,2$, we have
 \[
  g_{\mathbf{x_1-x_2}} =
  \lim_{n\to\infty}
   (G^{\mathrm{sym},n}_{\mathbf{x_1},\mathbf{x_2}} - 
   G^{\mathrm{sym},n}_{\mathbf{x_2},\mathbf{x_2}})
 \]
 \begin{equation}\label{formula: integral for G_(x,y)}
  {\color{white} g_{\mathbf{x_1-x_2}}} =
  \frac{1}{2\pi^2} \int\limits_{0}^{\pi}\rmd\alpha \int\limits_{0}^{\pi}\rmd\beta\, \frac{\cos{(x_1-x_2)\alpha}\cdot\cos{(y_1-y_2)\beta}-1} {2-(\cos\alpha+\cos\beta)}.
 \end{equation}
Roughly speaking, at   the infinite lattice the   Green function $G_{\mathbf{x_1},\mathbf{x_2}}$  can be thought of as  a sum of the  infinite term $G_{\mathbf{x_2},\mathbf{x_2}}$, which is the voltage at the site $\mathbf{x_2}$ of current injection or the expected number of returns of the random walk to the origin $\mathbf{x_2}$, and the well-defined finite voltage drop $g_{\mathbf{x_1-x_2}}$ between sites
$\mathbf{x_1}$ and $\mathbf{x_2}$, which depends only on the difference $(\mathbf{x_1}-\mathbf{x_2})$ of the lattice coordinates. 

Also, a fact important for further derivation is the translation invariance of the infinite part of $G_{\mathbf{x},\mathbf{x}}$, i.e. its independence of $\mathbf{x}$. Due to the translation invariance of the infinite square lattice, it may seem self-evident and often referred to as such in the literature. On the other hand, since $G_{\mathbf{x},\mathbf{x}}$ is infinite, the statement should be formulated in terms of the sequences $\{G^n_{\mathbf{x},\mathbf{x}}\}_{n\in\mathbb{N}}$ of Green functions associated with the finite subsets of the infinite lattice. 

\begin{lemma}\label{lem: electric}
 Let $\mathcal{L}=({V,E})$ with 
 \[
  V = \mathbb{Z}^2, \quad
  E = \{(v,v+\mathbf{e}_x),(v,v+\mathbf{e}_y)\}_{v\in V },
 \]
 and $\{\mathcal{L}_n\}_{n\in\mathbb{N}}$ be an exhausting sequence of connected lattice subsets. Assume that each subset~$\mathcal{L}_n$ has a Dirichlet boundary $\partial_n$ consisting of the sites adjacent to the sites of~$\mathcal{L}$ outside of $\mathcal{L}_n$. Let $\mathbf{x}$ and $\mathbf{y}$ be two fixed sites of $\mathcal{L}$ such that $\mathbf{x,y}\in \mathcal{L}_n$ for any $n\in \mathbb{N}$. Then 
 \[
  \lim_{n\to\infty} |G^n_{\mathbf{x,x}} - G^n_{\mathbf{y,y}}|=0.
 \]
\end{lemma}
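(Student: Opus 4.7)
The plan is to exploit the symmetry of the Green function, $G^n_{\mathbf{x,y}} = G^n_{\mathbf{y,x}}$, which follows from the fact that the discrete Laplacian on an undirected graph with symmetric weights is a symmetric matrix. Using this symmetry, one may split the difference of diagonal entries into a combination that is controlled by the potential-kernel limit of~(\ref{formula: integral for G_(x,y)}), and then exploit the evenness of that kernel to make the two contributions cancel.

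Concretely, I would write
\[
 G^n_{\mathbf{x,x}} - G^n_{\mathbf{y,y}} = \bigl(G^n_{\mathbf{x,x}} - G^n_{\mathbf{y,x}}\bigr) - \bigl(G^n_{\mathbf{y,y}} - G^n_{\mathbf{x,y}}\bigr),
\]
an identity valid because the two off-diagonal terms cancel upon using $G^n_{\mathbf{y,x}} = G^n_{\mathbf{x,y}}$. Each bracket now has the form (diagonal entry) minus (off-diagonal entry sharing its column index), which is exactly the combination for which the potential-kernel limit of~(\ref{formula: integral for G_(x,y)}) applies. Taking $n\to\infty$ and using the definition $g_{\mathbf{x_1-x_2}} = \lim_n(G^n_{\mathbf{x_1,x_2}} - G^n_{\mathbf{x_2,x_2}})$, the first bracket converges to $-g_{\mathbf{y-x}}$ and the second to $-g_{\mathbf{x-y}}$. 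The integrand of~(\ref{formula: integral for G_(x,y)}) depends on the coordinate differences only through cosines, so $g_{\mathbf{v}} = g_{-\mathbf{v}}$ and the two limits cancel, giving the desired vanishing.

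The main obstacle is not the algebra but justifying that the potential-kernel limit is \emph{universal}: the integral representation~(\ref{formula: integral for G_(x,y)}) is obtained along the specific symmetric-box exhaustion $\{\mathcal{L}_n^{\mathrm{sym}}\}$, whereas the lemma allows an arbitrary exhausting sequence of connected lattice subsets with Dirichlet boundary on the outer layer. To close this gap, I would note that $\phi^n(\mathbf{z}) := G^n_{\mathbf{z,y}} - G^n_{\mathbf{y,y}}$ satisfies the discrete Poisson equation $\Delta\phi^n = \delta_{\cdot,\mathbf{y}}$ on $\mathcal{L}_n \setminus \partial_n$, so that $\phi^n(\mathbf{z}) - g_{\mathbf{z-y}}$ is harmonic on the interior with boundary value $-G^n_{\mathbf{y,y}} - g_{\mathbf{z-y}}|_{\mathbf{z}\in\partial_n}$; since $G^n_{\mathbf{y,y}}$ and $g_{\mathbf{z-y}}$ grow logarithmically at the same rate as the boundary recedes, a discrete maximum-principle estimate forces this harmonic discrepancy to vanish on compact sets. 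This is the classical universality statement for the potential kernel of simple random walk on $\mathbb{Z}^2$, and once granted, the two-line symmetric cancellation above completes the proof.
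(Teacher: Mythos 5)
Your algebraic reduction is clean and correct: with $G^n$ symmetric and $g$ even, the identity $G^n_{\mathbf{x,x}} - G^n_{\mathbf{y,y}} = \bigl(G^n_{\mathbf{x,x}} - G^n_{\mathbf{y,x}}\bigr) - \bigl(G^n_{\mathbf{y,y}} - G^n_{\mathbf{x,y}}\bigr)$ does reduce the lemma to the single claim that $G^n_{\mathbf{x_1},\mathbf{x_2}} - G^n_{\mathbf{x_2},\mathbf{x_2}} \to g_{\mathbf{x_1-x_2}}$ along the \emph{given} exhaustion. But that claim is essentially the lemma in disguise, and the argument you offer to close it fails. The harmonic discrepancy $h^n(\mathbf{z}) = G^n_{\mathbf{z,y}} - G^n_{\mathbf{y,y}} - g_{\mathbf{z-y}}$ has boundary values $-G^n_{\mathbf{y,y}} - g_{\mathbf{z-y}}$ on $\partial_n$, and for a general connected $\mathcal{L}_n$ these oscillate across the boundary by roughly $\frac{1}{2\pi}\ln\bigl(\max_{\mathbf{z}\in\partial_n}|\mathbf{z}-\mathbf{y}|/\min_{\mathbf{z}\in\partial_n}|\mathbf{z}-\mathbf{y}|\bigr)$, a quantity that need not tend to zero (it is bounded away from zero already for rectangles of fixed aspect ratio, and can diverge for exhaustions with long thin protrusions). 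The maximum principle therefore only confines $h^n(\mathbf{x})$ to an $O(1)$ window, not an $o(1)$ one. To salvage your route you would need a strictly stronger input---for instance a discrete Harnack gradient estimate for the nonnegative harmonic function $\mathbf{z}\mapsto G^n_{\mathbf{z,y}} - g_{\mathbf{z-y}}$ combined with control of its value at $\mathbf{y}$---which amounts to importing the classical universality of the potential kernel rather than proving it.

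The paper's proof is different in kind and entirely self-contained: it translates $\mathcal{L}_n$ by $\mathbf{x}-\mathbf{y}$, compares the original, translated, intersected and united domains via Rayleigh's monotonicity principle, and then connects each boundary site of the intersected domain to the ground through a resistor of resistance $|\mathbf{x}-\mathbf{y}|$, obtaining the explicit bound $|G^n_{\mathbf{x,x}} - G^n_{\mathbf{y,y}}| \leqslant |\mathbf{x}-\mathbf{y}|/|\hat{\partial}_n| \to 0$, with $|\hat{\partial}_n|$ the number of those boundary sites. No asymptotics of the Green function and no potential kernel enter at all. If you wish to keep your decomposition, replace the maximum-principle step either by a monotonicity comparison of this type or by an explicit citation of the universality of the potential kernel for arbitrary exhaustions (e.g.\ from Spitzer or Lawler), bearing in mind that the latter is precisely the nontrivial content the lemma is meant to supply.
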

\begin{proof}
 The proof is based on the electric interpretation of the Green function and the Rayleigh's monotonicity principle~\cite{DoyleSnell1984} that suggests that if the resistances of a circuit are increased (respectively, decreased), then the effective resistance between any two sites cannot decrease (respectively, increase). For a given integer $n$, let us consider an auxiliary lattice subset 
 \[
  \mathcal{L}'_n = T_{\mathbf{x-y}} \mathcal{L}_n \subset \mathcal{L}
 \]
 obtained from $\mathcal{L}_n$ with the help of translation $T_{\mathbf{x-y}}$ by a vector $(\mathbf{x-y})$ that sends the site $\mathbf{y}$ to the site $\mathbf{x}$. Define also
 \[
  \hat{\mathcal{L}}_n = \mathcal{L}_n \cap \mathcal{L}'_n  
  \qquad\mbox{and}\qquad 
  \check{\mathcal{L}}_n = \mathcal{L}_n \cup \mathcal{L}'_n.
 \]
 The Dirichlet boundaries $\partial'_n$, $\hat{\partial}_n$ and $\check{\partial}_n$ corresponding to the subsets $\mathcal{L}'_n$, $\hat{\mathcal{L}}_n$ and $\check{\mathcal{L}}_n$, respectively, consist of the sites connected to sites of $\mathcal{L}$ beyond these sets (see Figure~\ref{fig: domains}~(a) and~(b)). 
 \begin{figure}[ht]
  \centerline{\includegraphics[width=0.95\textwidth]{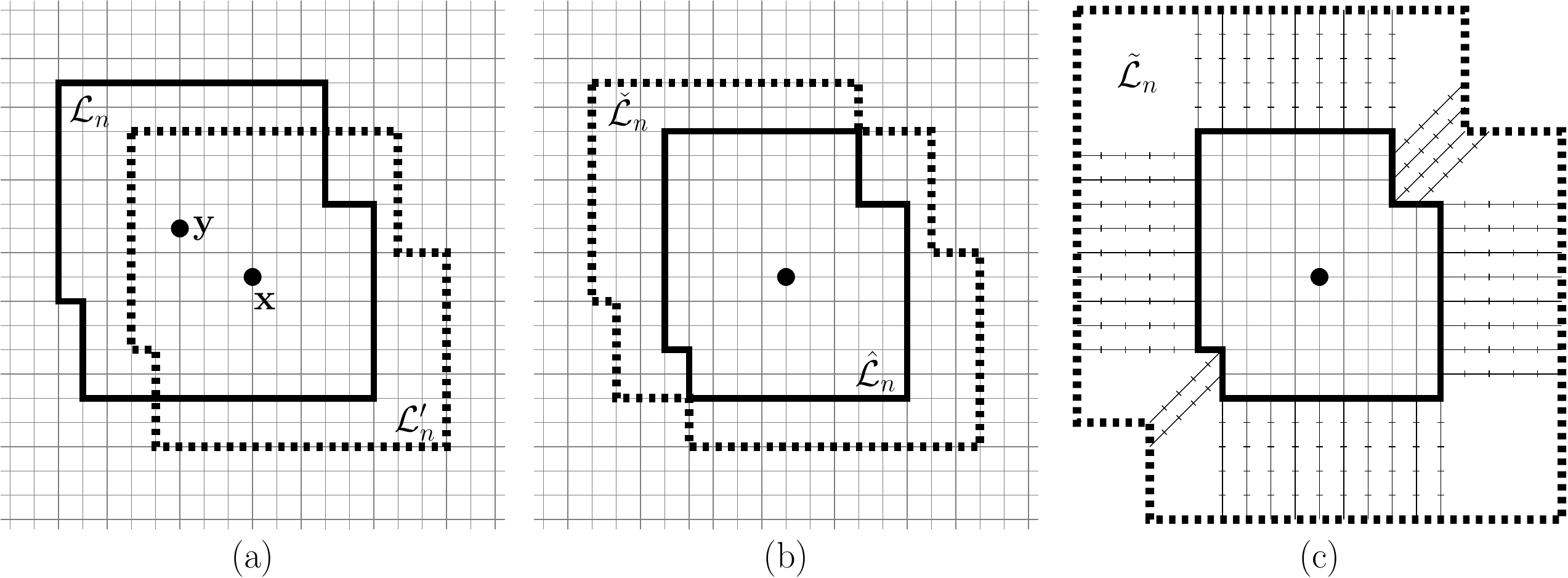}}
  \caption{The graphs used in the proof of Lemma~\ref{lem: electric}: (a) the subsets $\mathcal{L}_n$ and $\mathcal{L}'_n$ with the boundaries~$\partial_n$ and $\partial'_n$ consisting of sites on the solid and dashed contour, respectively; (b) the subsets $\hat{\mathcal{L}}_n$ and $\check{\mathcal{L}}_n$ with the boundaries $\hat{\partial}_n$ and $\check{\partial}_n$ consisting of sites on the solid and dashed contour, respectively; (c) the subset $\tilde{\mathcal{L}}_n$ with the boundary~$\tilde{\partial}_n$ consisting of sites on the dashed contour. The sites of $\tilde{\partial}_n$ are connected to the sites of $\hat{\mathcal{L}}_n$ by the strings of $|\mathbf{x}-\mathbf{y}|=5$ bonds. }
  \label{fig: domains}	
 \end{figure}

 Suppose that $G^n_{\mathbf{x,x}}>G^n_{\mathbf{y,y}}$. Since $\hat{\mathcal{L}}_n \subset \mathcal{L}_n$ and $\mathcal{L}'_n\subset \check{\mathcal{L}}_n$, according to Rayleigh's principle, we have
 \begin{equation}\label{eq: Green functions inequality}
  G^n_{\mathbf{x,x}} - G^n_{\mathbf{y,y}} \leqslant
  \check{G}^{n}_{\mathbf{x,x}} - \hat{G}^{n}_{\mathbf{x,x}}, 
\end{equation} 
where $\check{G}^{n}_{\mathbf{x,x}}$ and $\hat{G}^{n}_{\mathbf{x,x}}$ are the Green functions of the corresponding sets. To estimate the difference in  r.h.s. of~\eref{eq: Green functions inequality}, let us bound $\check{G}^{n}_{\mathbf{x,x}}$ as follows. Consider a new graph $\tilde{\mathcal{L}}_n$ whose internal vertices coincide with those of $\hat{\mathcal{L}}_n$ (including $\hat{\partial}_n$), such that every site of $\hat{\partial}_n$ is connected to the ground, i.e. a new Dirichlet boundary $\tilde{\partial}_n$, by a resistor with resistance~$|\mathbf{x}-\mathbf{y}|$ equal to the lattice distance between $\mathbf{x}$ and $\mathbf{y}$ (see Figure~\ref{fig: domains}~(c)). Since the lattice distance from every site of $\hat{\partial}_n$ to the nearest site of $\check{\partial}_n$ is at most $|\mathbf{x}-\mathbf{y}|$,  the effective resistance from $\mathbf{x}$ to $ \tilde{\partial}_n$ in $ \tilde{\mathcal{L}}_n$ exceeds that from $\mathbf{x}$ to $\check{\partial}_n$  in $\check{\mathcal{L}}_n$, i.e.
 \[
  \check{G}^{n}_{\mathbf{x,x}}\leqslant \tilde{G}^{ n}_{\mathbf{x,x}}.
 \]
 The value of $\tilde{G}^{ n}_{\mathbf{z,x}}$ at all non-boundary vertices $\mathbf{z}$ of $\tilde{\mathcal{L}}_n$ can be obtained by shifting the voltages $\hat{G}^{n}_{\mathbf{z,x}}$ by a constant value equal to the  voltage drop on the resistors with the resistance $|\mathbf{x}-\mathbf{y}|$ connecting  sites of $\hat{\partial}_n$ to $\tilde{\partial}_n$. The shift value can be found from the fact that the unit current flowing through these resistors to the ground is equally distributed among the number $|\hat{\partial}_n|$ of them. Thus,
 \begin{equation}\label{eq: Green functions inequality - 2}
  G^n_{\mathbf{x,x}} - G^n_{\mathbf{y,y}} \leqslant 
  \tilde{G}^{n}_{\mathbf{x,x}} - \hat{G}^{n}_{\mathbf{x,x}} =
  |\mathbf{x}-\mathbf{y}|/|\hat{\partial}_n|.
 \end{equation}   
The statement follows from the fact that for large enough $n$ the ratio in r.h.s of~\eref{eq: Green functions inequality - 2} is arbitrarily small. 
\end{proof}

The consequence of the above lemma is that in calculations we can use the infinite lattice Green function in the form 
\[
 G_{\mathbf{x_1,x_2}} = G_{0,0}+g_{{\mathbf{x_1-x_2}}}.
\]
Here the coordinate independent infinite part $G_{0,0}$ should be understood as
 \[
  G_{0,0} = G^n_{(\mathbf{x},\mathbf{x})}+o(1),
 \]
as $n\to\infty$, and, for $\mathbf{x}_i=(x_i,y_i)$ with $i=1,2$, the function  $g_{\mathbf{x_1-x_2}}=g_{(x_1-x_2,y_1-y_2)}$ is the finite part given by the r.h.s. of~\eref{formula: integral for G_(x,y)}, which depends on the relative coordinates only. Hence, as $n\to\infty$, it follows from~\eref{eq: G^(op)_(x,y) through G_(x,y)} and~\eref{eq: G^(cl)_(x,y) through G_(x,y)} that 
\begin{eqnarray}
 G^{\mathrm{op}}_{\mathbf{x_1},\mathbf{x_2}} & = &
 G^{\mathrm{op}}_{(x_1-x_2;y_1,y_2)} =
 g_{(x_1-x_2,y_1-y_2)} - g_{(x_1-x_2,y_1+y_2)},
 	\label{eq: G^(op)_(x,y) through g_(x-y)}  \\  
 G^{\mathrm{cl}}_{\mathbf{x_1},\mathbf{x_2}} & = &
 G^{\mathrm{cl}}_{(x_1-x_2;y_1,y_2)} =
 2G_{0,0} + g_{(x_1-x_2,y_1-y_2)} + g_{(x_1-x_2,y_1+y_2-1)},
    \label{eq: G^(cl)_(x,y) through g_(x-y)} 
\end{eqnarray}
where, for further brevity, we introduce notations 
\begin{equation}\label{formula: integral for G^(op)_(x,y)}
 G^{\mathrm{op}}_{(x;y_1,y_2)} = \frac{1}{\pi^2} \int\limits_{0}^{\pi}\rmd\alpha \int\limits_{0}^{\pi}\rmd\beta\, \frac{\cos{x\alpha}\cdot\sin{y_1\beta}\cdot\sin{y_2\beta}} {2-(\cos\alpha+\cos\beta)},
\end{equation}
\begin{equation}\label{formula: integral for G^(cl)_(x,y)}
 G^{\mathrm{cl}}_{(x;y_1,y_2)} = 2G_{0,0} 
\end{equation}
\[
 {\color{white}G^{\mathrm{cl}}_{(x;y_1,y_2)}}
  + \frac{1}{\pi^2} \int\limits_{0}^{\pi}\rmd\alpha \int\limits_{0}^{\pi}\rmd\beta\, \frac{\cos{x\alpha}\cdot\cos\big(y_1-1/2\big)\beta\cdot\cos\big(y_2-1/2\big)\beta-1} {2-(\cos\alpha+\cos\beta)}
\]
in which   the translation invariance, i.e. dependence on relative coordinates, in the direction parallel to the boundary is  explicitly incorporated. One can see that, similarly to the infinite lattice case, the Green function at the semi-infinite half-lattice with closed boundary expectedly has an infinite part, which will be crucial for the pure power law asymptotics of the watermelon probability.

In this article, the asymptotic large-distance behavior of the Green function is important. In particular, the asymptotics of the finite part of the infinite lattice Green function, as the distance $r$ along the horizontal direction between the sites with the same vertical coordinate grows to infinity, reads~\cite{Spitzer2001} 
\begin{equation}\label{formula: approximation for a_(m,n)}
 g_{(r,0)} =
  -\frac{1}{2\pi}\ln{r} -
   \frac{1}{\pi}\left(\frac{\gamma}{2}+\frac{3}{4}\ln2\right) +
   \frac{1}{24\pi{r}^2} +
   O\left(\frac{1}{r^4}\right).
\end{equation}
Hence, using relations~\eref{eq: G^(op)_(x,y) through g_(x-y)} and~\eref{eq: G^(cl)_(x,y) through g_(x-y)}, the symmetry $g_{(m,n)}=g_{(\pm n,\pm m)}$ and the recurrent formula
\[
 4g_{(m,n)} = g_{(m+1,n)} + g_{(m,n+1)} + g_{(m-1,n)} + g_{(m,n-1)}
\]
for $(m,n)\neq(0,0)$, we can express the asymptotic behavior of functions $G^{\mathrm{op}}_{(r;1,1)}$ and~$G^{\mathrm{cl}}_{(r;1,1)}$ involved in further calculations as follows:
\begin{eqnarray}
 G^{\mathrm{op}}_{(r;1,1)} &=& \frac{1}{\pi{r}^2}-O\left(\frac{1}{r^4}\right),
 	\label{eq: approximation for G^(op)_(r;1,1)} \\
 G^{\mathrm{cl}}_{(r;1,1)} &=& 2G_{0,0} -\frac{\ln{r}}{\pi}+O\left(1\right).
 	\label{eq: approximation for G^(cl)_(r;1,1)}
\end{eqnarray}

\section{Evaluating the determinants}\label{sec: Counting the determinants}
As we saw in the previous section, the probability of watermelon connecting the strings $I$ and $J$ of sites near the boundary of the half-lattice is given by the ratio of determinants $\det{G}_I^J$ and $\det{G}_I^I$, where $G$ is ether $G^{\mathrm{op}}$ or $G^{\mathrm{cl}}$ for open and closed boundary, respectively. This is why the main goal of this section is evaluation of leading asymptotics of special determinants. It is worth mentioning that we obtain more general result than we will actually need for our purpose.

Throughout this section, we use the following notations. Given a positive integer~$k$, we consider two $k$-tuples of variables $\mathbf{v}=(v_1,\dots,v_k)$ and $\mathbf{u}=(u_1,\dots,u_k)$. We also consider a formal power series $f(t)$ in one variable,
\begin{equation}\label{eq: f(t)}
 f(t) = \sum\limits_{l=0}^{\infty}b_l{t}^l,
\end{equation}
with the sequence of coefficients $\mathbf{b}=\{b_i\}_{i\in \mathbb{N}_0}$. Our goal is to evaluate the determinant
\begin{equation}\label{eq: F_k(u,v;b)}
 F_k(\mathbf{u},\mathbf{v};\mathbf{b}) =
 \det\limits_{1\leqslant{i,j}\leqslant{k}}\big[f(v_i-u_j)\big].
\end{equation}
Its behavior is given by the following fundamental lemma.

\begin{lemma}\label{lemma: fundamental determinant lemma}
 If $F_k(\mathbf{u},\mathbf{v};\mathbf{b})$ is given by~\eref{eq: F_k(u,v;b)}, then
\begin{equation}\label{eq: F_k statement}
 F_k(\mathbf{u},\mathbf{v};\mathbf{b}) =
 \Delta(\mathbf{v}) \Delta(-\mathbf{u})
 \sum\limits_{\lambda,\mu\in\mathcal{I}_k}
  C_{\lambda,\mu} s_{\lambda}(\mathbf{v}) s_{\mu}(-\mathbf{u}),
\end{equation}
 where
 \begin{itemize}
  \item the summation indices $\mu=(\mu_1\geqslant\cdots\geqslant\mu_k\geqslant 0)$ and $\lambda=(\lambda_1\geqslant\cdots\geqslant\lambda_k\geqslant 0)$ run over the set $\mathcal{I}_k$ of partitions with at most $k$ non-zero parts,
  \item $\Delta(\mathbf{v})$ and $\Delta(\mathbf{-u})$ are the Vandermonde determinants, i.e.
  \[
   \Delta (\mathbf{v})=\prod_{1\leqslant i<j\leqslant k}(v_i-v_j)
   \qquad\mbox{and}\qquad
   \Delta (-\mathbf{u})=\prod_{1\leqslant i<j\leqslant k}(-u_i+u_j),
  \]
  \item $s_{\lambda}(\mathbf{v})=s_{\lambda}(v_1,\ldots,v_k)$ and $s_{\mu}(-\mathbf{u})=s_{\mu}(-u_1,\ldots,-u_k)$ are Schur symmetric polynomials,
  \item the constants $C_{\lambda,\mu}$ are given by
   \[
	C_{\lambda,\mu} =
	\det\limits_{1\leqslant{i,j}\leqslant{k}}
	 \left[b_{\lambda_j+\delta_j+\mu_i+\delta_i}
	  {\lambda_j+\delta_j+\mu_i+\delta_i \choose \lambda_j+\delta_j}
	 \right]
   \]	
   with
   \begin{equation}\label{eq: delta}
	\delta=(\delta_1,\dots,\delta_k)=(k-1,\dots,0).
   \end{equation}
 \end{itemize}  
\end{lemma}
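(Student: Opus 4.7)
The plan is to express the matrix $M_{ij} = f(v_i - u_j)$ as a product of (infinite) matrices of Vandermonde type sandwiching a coefficient matrix, and then apply the Cauchy--Binet formula twice to extract Schur polynomials on both sides. First I would expand $f(v_i - u_j) = \sum_{l \geq 0} b_l (v_i - u_j)^l$ using the binomial theorem to write
\[
 f(v_i - u_j) = \sum_{m,n \geq 0} b_{m+n} \binom{m+n}{m} v_i^{m} (-u_j)^{n}.
\]
This exhibits the factorization $M = V\, B\, U^{T}$, where $V$ is the $k \times \infty$ matrix with entries $V_{i,m} = v_i^{m}$, similarly $U_{j,n} = (-u_j)^{n}$, and $B$ is the infinite symmetric matrix with entries $B_{m,n} = b_{m+n}\binom{m+n}{m}$.

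Next I would apply Cauchy--Binet to the factorization $\det M = \det(V \cdot (BU^{T}))$, getting a sum over $k$-element subsets $S \subset \mathbb{N}_0$. Writing $S = \{\lambda_1 + \delta_1 > \lambda_2 + \delta_2 > \cdots > \lambda_k + \delta_k\}$ with $\delta = (k-1,\ldots,0)$ and $\lambda \in \mathcal{I}_k$ puts $S$ in bijection with partitions in $\mathcal{I}_k$, and the bialternant (Jacobi--Trudi) formula identifies
\[
 \det\bigl[v_i^{\lambda_j + \delta_j}\bigr]_{1 \leq i,j \leq k} = s_\lambda(\mathbf{v})\, \Delta(\mathbf{v}).
\]
A second application of Cauchy--Binet on the remaining factor $\det(BU^{T})_{S,*}$, with subset $T$ parametrized by $\mu \in \mathcal{I}_k$ via $t_i = \mu_i + \delta_i$, gives (using that the determinant of the $U$-part is invariant under transposition of the index matrix)
\[
 \det\bigl[(-u_j)^{\mu_i + \delta_i}\bigr]_{1 \leq i,j \leq k}
  = s_\mu(-\mathbf{u})\, \Delta(-\mathbf{u}),
\]
while the surviving middle block is exactly
\[
 \det\left[ b_{\lambda_j + \delta_j + \mu_i + \delta_i}
  \binom{\lambda_j + \delta_j + \mu_i + \delta_i}{\lambda_j + \delta_j}\right]_{1 \leq i,j \leq k}
 = C_{\lambda,\mu},
\]
possibly after one transposition, which does not change the determinant. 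Combining the three pieces produces the claimed double sum.

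The computation is essentially bookkeeping, and the only genuine subtlety is clerical: keeping track of which index labels rows versus columns in each of the three matrices so that the bialternant formula matches the statement (in particular, verifying that the binomial's lower index is $\lambda_j + \delta_j$ rather than $\mu_i + \delta_i$, which amounts to choosing the correct transposition of the middle Cauchy--Binet block). A minor technical point to address is that $V$, $B$, $U^{T}$ have infinitely many rows/columns, so the Cauchy--Binet sum is a priori infinite; but since $f$ is a formal power series and we are manipulating formal identities in $\mathbf{u}, \mathbf{v}$, coefficient-wise convergence is immediate, so no analytic issue arises.
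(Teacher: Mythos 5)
Your proposal is correct, but it takes a genuinely different route from the paper. The paper proves the lemma by brute-force expansion: it pulls the series out of the determinant, applies the binomial theorem entrywise, and then laboriously reorganizes the resulting multiple sum into sums over ordered tuples by splitting off two permutation-group summations (one absorbed into the alternant $\det[v_i^{q_j}]$, one into the determinant of binomial-weighted coefficients), before finally identifying the alternants with $\Delta\cdot s_\lambda$ via the bialternant formula. You instead observe the factorization $M=VBU^{T}$ with $V_{i,m}=v_i^{m}$, $U_{j,n}=(-u_j)^{n}$, $B_{m,n}=b_{m+n}\binom{m+n}{m}$, and apply Cauchy--Binet twice; the subsets $S,T\subset\mathbb{N}_0$ are then canonically labelled by partitions $\lambda,\mu$ via $\lambda+\delta$, $\mu+\delta$, and the three determinantal factors are exactly $\Delta(\mathbf{v})s_\lambda(\mathbf{v})$, $C_{\lambda,\mu}$, and $\Delta(-\mathbf{u})s_\mu(-\mathbf{u})$. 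Your approach is cleaner and makes the combinatorial bookkeeping of the paper (the splitting over $\sigma,\sigma',\tau\in S_k$ and the exclusion of degenerate tuples) automatic, since Cauchy--Binet already encodes the antisymmetrization; the reordering signs from choosing increasing versus decreasing subset orderings cancel between adjacent factors, and the symmetry $\binom{m+n}{m}=\binom{m+n}{n}$ of $B$ disposes of the transposition issue you flag. Your remark on formal convergence is also adequate: each monomial in $\mathbf{u},\mathbf{v}$ receives finitely many contributions, so the infinite Cauchy--Binet sums are legitimate as identities of formal series, which is the same level of rigour the paper itself works at. Two trivial quibbles: the identity $\det[v_i^{\lambda_j+\delta_j}]=s_\lambda(\mathbf{v})\Delta(\mathbf{v})$ is the bialternant (Jacobi) formula rather than Jacobi--Trudi, and it would be worth stating explicitly that $B$ is symmetric, which is the one-line reason the middle block can be transposed freely.
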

\begin{proof}
 First, we bring the summation in~\eref{eq: F_k(u,v;b)} out of the determinant,
 \[
  F_k(\mathbf{u},\mathbf{v};\mathbf{b}) =
  \sum\limits_{p_1,\ldots,p_k=0}^{\infty}
   \det\limits_{1\leqslant{i,j}\leqslant{k}}
    \big[b_{p_i}(v_i-u_j)^{p_i}\big].
 \]
 Using the binomial expansion, we obtain
 \begin{eqnarray*}
  F_k(\mathbf{u},\mathbf{v};\mathbf{b}) &=&
  \sum\limits_{p_1,\ldots,p_k=0}^{\infty}\;
   \sum\limits_{q_1=0}^{p_1}\ldots\sum\limits_{q_k=0}^{p_k}
    \det\limits_{1\leqslant{i,j}\leqslant{k}}
     \left[b_{p_i}{p_i\choose{q_i}}v_i^{q_i}(-u_j)^{p_i-q_i}\right] \\ &=&
  \sum\limits_{q_1,\ldots,q_k=0}^{\infty}\;
   \sum\limits_{l_1,\ldots,l_k=0}^{\infty}
    \left(\prod\limits_{i=1}^k b_{q_i+l_i}{q_i+l_i\choose{q_i}}v_i^{q_i}\right)
     \det\limits_{1\leqslant{i,j}\leqslant{k}} \big[(-u_j)^{l_i}\big],
 \end{eqnarray*}
 where in the second line the summation indices are changed to $l_i=p_i-q_i$ and common factors are extracted from the rows of the determinant.
 The independent summations in $q_1,\dots,q_k$ and $l_1,\dots,l_k$ can be subdivided to the summations over ordered tuples and summations over  permutation group $S_k$:
 \begin{eqnarray*}
  F_k(\mathbf{u},\mathbf{v};\mathbf{b}) &=&
  \sum\limits_{0\leqslant q_1\leqslant\ldots\leqslant q_k}^{\infty}\;
  \sum\limits_{0\leqslant l_1<\ldots<l_k}^{\infty}\;
   \sum\limits_{\sigma'\in S_k,\, \sigma'(q)\ne q}\;
    \sum\limits_{\sigma\in S_k} (-1)^{\sigma}
     \det\limits_{1\leqslant{i,j}\leqslant{k}} \big[(-u_j)^{l_i}\big] \\
      &\times& \prod\limits_{i=1}^k b_{\sigma'(q_i)+\sigma(l_i)}
       {\sigma'(q_i)+\sigma(l_i)\choose{\sigma'(q_i)}} v_i^{\sigma'(q_i)},
\end{eqnarray*}
 where the indices $(l_1,\dots,l_k)$ are strictly ordered due to the skew symmetry of the determinant and the permutations of weakly ordered indices $q_1,\dots,q_k$ which leave the tuple unchanged are excluded from the summation.    
Replacing the summation over $\sigma'$ by the summation over shifted permutation $\tau=\sigma^{-1}\sigma'$, we obtain
 \[
  F_k(\mathbf{u},\mathbf{v};\mathbf{b}) =
  \sum\limits_{0\leqslant q_1\leqslant\ldots\leqslant q_k}^{\infty}\;
  \sum\limits_{0\leqslant l_1<\ldots<l_k}^{\infty}\;
   \sum\limits_{\tau\in S_k,\, \tau(q)\ne q}\;
    \sum\limits_{\sigma\in S_k} (-1)^{\sigma}
     \det\limits_{1\leqslant{i,j}\leqslant{k}} \big[(-u_j)^{l_i}\big]
 \]
 \begin{equation}\label{eq: F_k_2}
  {\color{white} F_k(\mathbf{u},\mathbf{v};\mathbf{b})} \times
  \prod\limits_{i=1}^k v_i^{\sigma\tau(q_i)}
   \prod\limits_{i=1}^k b_{\sigma(\tau(q_i)+l_i)}
    {\sigma(\tau(q_i)+l_i)\choose{\sigma\tau(q_i)}}.
 \end{equation}
 Here we should note that, on the one hand,
 \[
  \sum\limits_{\sigma\in S_k} \left(\prod\limits_{i=1}^k v_i^{\sigma\tau(q_i)}\right) (-1)^{\sigma} (-1)^{\tau} = \det\limits_{1\leqslant{i,j}\leqslant{k}} \big[(v_i)^{q_j}\big],
 \]
 and the expression on the r.h.s. is a skew-symmetric function of $q_1,\dots,q_k$. Therefore, it suffices to carry out the summation only for those tuples $q_1,\ldots,q_k$ that satisfy inequalities $0\leqslant q_1<\ldots<q_k$. On the other hand, the second product in~\eref{eq: F_k_2}  does not depend on $\sigma$. That is why, extracting this product, we have
 \[
  \sum\limits_{\tau\in S_k} \left(\prod\limits_{i=1}^k b_{\tau(q_i)+l_i}(r) {\tau(q_i)+l_i\choose{\tau(q_i)}}\right) (-1)^{\tau} = \det\limits_{1\leqslant{i,j}\leqslant{k}} \left[b_{q_j+l_i}(r) {q_j+l_i\choose{q_i}}\right].
 \]
 Consequently, we obtain
 \begin{eqnarray*}
  F_k(\mathbf{u},\mathbf{v};\mathbf{b}) &=&
  \sum\limits_{0\leqslant q_1<\ldots<q_k}^{\infty}
   \sum\limits_{0\leqslant l_1<\ldots<l_k}^{\infty}
    \det\limits_{1\leqslant{i,j}\leqslant{k}}
     \left[b_{q_j+l_i}(r) {q_j+l_i\choose{q_j}}\right]  \\
  &\times& \det\limits_{1\leqslant{i,j}\leqslant{k}} \big[v_i^{q_j}\big]
   \det\limits_{1\leqslant{i,j}\leqslant{k}} \big[(-u_j)^{l_i}\big].
 \end{eqnarray*}
 Finally, we proceed from the summation over strictly ordered $k$-tuples $q=(q_1,\dots,q_k)$ and $l=(l_1,\dots,l_k)$ to the one over weakly ordered partitions $\lambda=q-\delta$ and $\mu=l-\delta$ with $\delta$ defined in~\eref{eq: delta}. Using the definition of Schur symmetric polynomial $s_{\alpha}(\mathbf{x})$ of variables $\mathbf{x}=(x_1,\dots,x_k)$ indexed by a partition $\alpha=(\alpha_1\geqslant\cdots\geqslant\alpha_k\geqslant0)$ in terms of the alternating polynomial~\cite{Macdonald1998},
 \begin{equation}\label{formula: Schur polynomial}
  a_{\alpha+\delta}(\mathbf{x}) =
  \det\limits_{1\leqslant{i,j}\leqslant{k}} \big[x_i^{\alpha_j+\delta_j}\big]
   = \Delta(\mathbf{x}) s_{\alpha}(\mathbf{x}),
 \end{equation}
 we arrive at~\eref{eq: F_k statement}.
\end{proof}

The above lemma can be used to construct asymptotic approximations for the determinants involved in the watermelon probabilities starting from the asymptotic approximations for the Green functions at large distances. Namely, the matrix coefficients are given by Green functions of two arguments associated with positions separated by distances obtained by a finite shift $t$ from a large distance $r$. In other words, we start with an asymptotic expansion of a function of the form $g(r+t)$, which is in fact the expansion in powers of $t/r$. Specifically, the series $f(t)$ represents the asymptotic expansion of some function $g(r+t)$ with respect to the sequence~$\mathbf{b}$  of functions $b_n=b_n(r)$ of a large parameter $r$ in the sense that $b_{n+1}=o(b_n)$, as $r\to \infty$.  Then, the estimate is obtained  using  the following direct consequence of Lemma~\ref{lemma: fundamental determinant lemma}.

\begin{corollary}\label{cor: asym det}
 If the series $f(t)$ given by~\eref{eq: f(t)} represents an asymptotic expansion of a function $g(t+r)$ with respect to an asymptotic sequence $\mathbf{b}$ of functions $b_n=b_n(r)$ of a~variable~$r$ such that $b_{n+1}=O(b_n/r)$, as $r\to \infty$, then 
 \[
  \fl\quad
  \det[g(r+u_i-v_j)]_{1\leqslant i,j\leqslant k} 
  = \Delta(\mathbf{v})  \Delta(-\mathbf{u}) \det_{0\leqslant i,j\leqslant k-1}\left[ b_{i+j}  {i+j \choose{j}} \right] \left(1+O\left(\frac{1}{r}\right)\right),
 \]
as $r\to\infty$, where the equal sign is understood in a sense of asymptotic expansions.
\end{corollary}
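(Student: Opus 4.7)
The plan is to apply Lemma~\ref{lemma: fundamental determinant lemma} to $\bigl[g(r+u_i-v_j)\bigr]_{i,j}$ by substituting the asymptotic expansion of $g(r+t)$ for the formal series $f(t)$, to isolate the dominant contribution coming from the empty partitions $\lambda=\mu=\emptyset$, and to bound the remainder of the resulting double sum via the hypothesis $b_{n+1}=O(b_n/r)$.

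After swapping $\mathbf{u}\leftrightarrow\mathbf{v}$ in Lemma~\ref{lemma: fundamental determinant lemma} so as to produce $\det[f(u_i-v_j)]$, one obtains
$$\det\bigl[g(r+u_i-v_j)\bigr]_{1\le i,j\le k} = \Delta(\mathbf{u})\Delta(-\mathbf{v})\sum_{\lambda,\mu\in\mathcal{I}_k} C_{\lambda,\mu}\,s_\lambda(\mathbf{u})\,s_\mu(-\mathbf{v}),$$
and the identity $\Delta(-\mathbf{w})=(-1)^{k(k-1)/2}\Delta(\mathbf{w})$ gives $\Delta(\mathbf{u})\Delta(-\mathbf{v})=\Delta(\mathbf{v})\Delta(-\mathbf{u})$, matching the prefactor of the statement. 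The $\lambda=\mu=\emptyset$ term has $s_\emptyset\equiv 1$ and, with $\delta=(k-1,\ldots,0)$, coefficient
$$C_{\emptyset,\emptyset}=\det_{1\le i,j\le k}\left[b_{(k-i)+(k-j)}\binom{(k-i)+(k-j)}{k-j}\right];$$
reversing both the row and the column orderings (a product of two sign factors $(-1)^{k(k-1)/2}$ that cancels) rewrites this as $\det_{0\le i,j\le k-1}\bigl[b_{i+j}\binom{i+j}{j}\bigr]$, which is exactly the leading factor claimed in the corollary.

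For every remaining $(\lambda,\mu)$ with $|\lambda|+|\mu|\ge 1$, each entry of $C_{\lambda,\mu}$ shifts the $b$-index of the corresponding entry of $C_{\emptyset,\emptyset}$ by $\lambda_j+\mu_i$, which by iterating $b_{n+1}=O(b_n/r)$ contributes an additional factor of order $r^{-\lambda_j-\mu_i}$. Since the Schur polynomials $s_\lambda(\mathbf{u}),s_\mu(-\mathbf{v})$ at fixed bounded $\mathbf{u},\mathbf{v}$ grow only polynomially in $|\lambda|,|\mu|$, the sum of all terms with $|\lambda|+|\mu|\ge 1$ is dominated by a convergent geometric series in $r^{-1}$, producing the $(1+O(1/r))$ factor.

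The main obstacle is to make this tail estimate rigorous, in particular to absorb the exponentially growing binomial coefficients $\binom{\lambda_j+\delta_j+\mu_i+\delta_i}{\lambda_j+\delta_j}$ entering $C_{\lambda,\mu}$ and to justify the substitution of an asymptotic expansion into the formal identity of Lemma~\ref{lemma: fundamental determinant lemma}. A clean route is to truncate $f(t)$ at an order depending only on $k$, to apply Lemma~\ref{lemma: fundamental determinant lemma} to the resulting polynomial in $t$, and to use multilinearity of the determinant together with the sequence hypothesis on $b_n$ to verify that every contribution beyond $C_{\emptyset,\emptyset}$ is $o(C_{\emptyset,\emptyset})$ as $r\to\infty$.
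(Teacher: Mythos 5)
Your proposal is correct and takes essentially the same route as the paper's proof: apply Lemma~\ref{lemma: fundamental determinant lemma} with $f(t)=g(r+t)$, identify the $\lambda=\mu=0$ term (where $s_\lambda=s_\mu=1$ and $C_{0,0}$ reduces, after reversing row and column order, to $\det_{0\leqslant i,j\leqslant k-1}[b_{i+j}\binom{i+j}{j}]$) as the leading contribution, and observe that every other term loses at least one power of $r$ via $b_{n+1}=O(b_n/r)$. The paper's own proof is just this power count stated in two sentences; your extra care about the $\mathbf{u}\leftrightarrow\mathbf{v}$ relabelling and about truncating the partition sum merely makes explicit what the paper leaves implicit.
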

\begin{proof}
 Counting powers of $r$ shows that the leading order term of the sum in~\eref{eq: F_k statement}, where 
 \[
  F_k(\mathbf{u},\mathbf{v};\mathbf{b}) =
  \det[g(r+u_i-v_j)]_{1\leqslant i,j\leqslant k},
 \] 
 corresponds to $\lambda=\mu=0^{(k)}$. In this case, $s_{\lambda}(\mathbf{v})=s_{\mu}(-\mathbf{u})=1$, and the power of   $r$ in the correction term is  less by one.  
\end{proof}

Let us consider two basic examples of the use of this statement, relevant for the asymptotics of watermelon probabilities near the open and closed boundary of the half-infinite lattice.

\begin{lemma}\label{lem: determinant for powers}
 If $g(x)=x^{-\alpha}$ with $x\in \mathbb{R}_{\geqslant 0}$ and $\alpha\in \mathbb{R}$, then, as $r\to\infty$,
 \begin{equation}\label{eq: det g power}
  \det_{1\leqslant i,j\leqslant k} [g(r+u_i-v_j)] =
  \frac{\Delta(\mathbf{v}) \Delta(-\mathbf{u})}{r^{k(\alpha+k-1)}}
   \prod_{i=0}^{k-1} \frac{(\alpha)_i}{i!}
    \left(1+O\left(\frac{1}{r}\right)\right),
 \end{equation}
 where $(\alpha)_i = \alpha(\alpha+1)\cdots (\alpha+i-1)$ is the rising factorial aka the Pochhammer symbol.
\end{lemma}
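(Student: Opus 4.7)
The plan is to apply Corollary \ref{cor: asym det} directly: expand $g(r+t) = (r+t)^{-\alpha}$ as a formal power series in $t$, identify the coefficients $b_l$ as an asymptotic sequence in $r$, and then evaluate the resulting $k\times k$ determinant of leading coefficients in closed form.

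First I would write the Newton binomial expansion
\[
 (r+t)^{-\alpha} = r^{-\alpha}\sum_{l=0}^{\infty}\binom{-\alpha}{l}\Bigl(\frac{t}{r}\Bigr)^{l} = \sum_{l=0}^{\infty} b_l\, t^l, \qquad b_l = \frac{(-1)^l (\alpha)_l}{l!\, r^{\alpha+l}},
\]
using $\binom{-\alpha}{l} = (-1)^l (\alpha)_l / l!$. Since $b_{l+1}/b_l = -(\alpha+l)/((l+1)r) = O(1/r)$, the sequence $\mathbf{b}$ fulfils the hypothesis of Corollary \ref{cor: asym det}. Applying that corollary reduces the problem to evaluating
\[
 D_k(\alpha) := \det_{0\leqslant i,j\leqslant k-1}\left[b_{i+j}\binom{i+j}{j}\right] = \det_{0\leqslant i,j\leqslant k-1}\left[\frac{(-1)^{i+j}(\alpha)_{i+j}}{i!\,j!\, r^{\alpha+i+j}}\right]
\]
up to the universal prefactor $\Delta(\mathbf v)\Delta(-\mathbf u)(1+O(1/r))$.

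Next I would pull the row and column factors $(-1)^i/(i!\,r^{\alpha/k + i})$ out (the signs combine to $(-1)^{k(k-1)}=1$), giving
\[
 D_k(\alpha) = \frac{1}{r^{k\alpha+k(k-1)}}\prod_{i=0}^{k-1}\frac{1}{(i!)^2}\cdot H_k(\alpha), \qquad H_k(\alpha) := \det_{0\leqslant i,j\leqslant k-1}\bigl[(\alpha)_{i+j}\bigr].
\]
The Hankel determinant $H_k(\alpha)$ is evaluated using the Pochhammer identity $(\alpha)_{i+j} = (\alpha)_i\,(\alpha+i)_j$, which factors the matrix into a diagonal matrix times $[(\alpha+i)_j]$:
\[
 H_k(\alpha) = \Bigl(\prod_{i=0}^{k-1}(\alpha)_i\Bigr)\det_{0\leqslant i,j\leqslant k-1}\bigl[(\alpha+i)_j\bigr].
\]
Since $(\alpha+i)_j$ is a monic polynomial of degree $j$ in $(\alpha+i)$, column operations reduce $[(\alpha+i)_j]$ to the Vandermonde matrix $[(\alpha+i)^j]$ without changing the determinant, and the Vandermonde evaluation yields $\prod_{0\leqslant i<i'\leqslant k-1}(i'-i) = \prod_{j=0}^{k-1}j!$.

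Putting everything together gives
\[
 D_k(\alpha) = \frac{1}{r^{k(\alpha+k-1)}}\prod_{i=0}^{k-1}\frac{(\alpha)_i}{i!},
\]
which, combined with Corollary \ref{cor: asym det}, yields \eref{eq: det g power}. The only step requiring care is the Hankel evaluation, but the factorisation $(\alpha)_{i+j}=(\alpha)_i(\alpha+i)_j$ immediately converts it to a Vandermonde determinant, so no real obstacle arises; the rest is bookkeeping of signs and factorials.
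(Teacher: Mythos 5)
Your proposal is correct and follows the paper's route up to the last step: both expand $(r+t)^{-\alpha}$ binomially to get $b_l=(-1)^l(\alpha)_l/(l!\,r^{\alpha+l})$, invoke Corollary~\ref{cor: asym det}, strip the row/column factors, and use $(\alpha)_{i+j}=(\alpha)_i(\alpha+i)_j$ to reduce everything to $\det_{0\leqslant i,j\leqslant k-1}[(\alpha+i)_j]$. Where you diverge is in evaluating that last determinant: the paper performs explicit row operations whose entries are simplified via a Chu--Vandermonde summation, $\sum_{s=0}^i(-1)^s\binom{i}{s}(\alpha+i-s)_j=(j-i+1)_i(\alpha+i)_j/(\alpha+j)_i$, producing an upper-triangular matrix with diagonal $j!$; you instead observe that $(\alpha+i)_j$ is a monic polynomial of degree $j$ in $\alpha+i$, so column operations turn the matrix into the Vandermonde matrix $[(\alpha+i)^j]$ with determinant $\prod_{0\leqslant i<i'\leqslant k-1}(i'-i)=\prod_{j=0}^{k-1}j!$. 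Your version is shorter and avoids the hypergeometric identity entirely (it is essentially the ``shorter proof'' the authors credit to Sportiello in the acknowledgments); the paper's version is more computational but self-contained at the level of row reduction. The bookkeeping of signs and powers of $r$ in your write-up checks out, so no gap.
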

\begin{proof}
 Let us apply Corollary~\ref{cor: asym det} to the function $f(t)$ whose asymptotic expansion is given by
 \begin{equation}\label{eq:g=f}
  f(t) = g(t+r) = \sum_{i=0}^\infty b_n t^n,
 \end{equation}
 where, since $g(x)=x^{-\alpha}$, we have
 \begin{equation}\label{eq: b_n def}
  b_n = \frac{(-1)^n(\alpha)_n}{r^{\alpha+n}n!},\quad
  n\in \mathbb{N}_0. 
 \end{equation}
 The result follows directly from the following chain of identities
 \numparts
 \begin{eqnarray}
  && \!\!\!\! \!\!\!\!
   \det_{0\leqslant i,j\leqslant k-1}
    \left[ b_{i+j}  {i+j \choose{j}} \right] =
   \det_{0\leqslant i,j\leqslant k-1}
    \left[ \frac{(-1)^{i+j}(\alpha)_{i+j}}{r^{\alpha+i+j}(i+j)!}
     {i+j \choose{j}} \right]
       \label{eq: det eq 1} \\
  &&	 = r^{-k(\alpha+k-1)}
   \left(\prod_{i=0}^{k-1} i!\right)^{-2}
    \prod_{i=0}^{k-1} (\alpha)_{i} \,
     \det_{0\leqslant i,j\leqslant k-1} [(\alpha+i)_{j} ]
       \label{eq: det eq 2} \\
  && = r^{-k(\alpha+k-1)}
   \left(\prod_{i=0}^{k-1} i!\right)^{-2}
    \prod_{i=0}^{k-1} (\alpha)_{i} \,
     \det_{0\leqslant i,j\leqslant k-1}
      \left[ \frac{(j-i+1)_i (\alpha+i)_j}{(\alpha+j)_i} \right]
       \label{eq: det eq 3} \\
  && = r^{-k(\alpha+k-1)}
   \prod_{i=0}^{k-1} \frac{(\alpha)_{i}}{i!}
       \label{eq: det eq 4}.  
\end{eqnarray}
 \endnumparts
Here, we first insert definition~\eref{eq: b_n def} of the coefficients $b_n$ into the determinant~\eref{eq: det eq 1}. In~\eref{eq: det eq 2}, using  column- and row-wise linearity of a determinant, we take factors depending  only either on column or row index out of the determinant and apply relation $(\alpha)_{i+j}/(\alpha)_i=(\alpha+i)_j$. In~\eref{eq: det eq 3}, we modify the matrix in the determinant   by replacing the rows below the first one by  their sum with a linear combination of rows above it, 
 \begin{equation}
  \sum _{s=0}^i (-1)^s {i\choose s} (\alpha+i-s)_j =
  \frac{(j-i+1)_i (\alpha+i)_j}{(\alpha+j)_i}
 \end{equation}
which is simplified to a simple ratio of Pochhammer symbols by induction on $i$ or with the help of Chu-Vandermonde identity~\cite{AndrewsAskeyRoy1999}.
Note that $(j-i+1)_i=0$ if $j<i$. Thus, the modified matrix is upper-triangular with diagonal elements with column and row indices $j=0,\dots,k-1$ equal to $j!$. This yields~\eref{eq: det eq 4}.
\end{proof}

Note that the $\alpha=0$ case of the above lemma is trivial, since the only nonzero determinant corresponds to $k=1$. A non-trivial analogue of $\alpha=0$   is given by the log function to which we also  add a constant.

\begin{lemma} \label{lemma: determinant for logarithm}
 If $g(x)=c_1-c_2 \log x$ with $x\in \mathbb{R}_{\geqslant 0}$ and $c_1,c_2 \in \mathbb{R}$, then, as $r\to\infty$,
 \begin{equation}\label{eq: log det asymp}
  \det_{1\leqslant i,j\leqslant k} [g(r+u_i-v_j)] =
  \frac{\Delta(\mathbf{v}) \Delta(-\mathbf{u}) \ln r}{(k-1)! r^{k(k-1)}}
   \left(c_2^{k}+O\left(\frac{1}{\ln r}\right)\right)
 \end{equation}
and 
 \begin{eqnarray}\label{eq: log det asymp const}
  \lim_{c_1\to\infty} c^{-1}_{1}
   \det_{1\leqslant i,j\leqslant k} [g(r+u_i-v_j)] =
  \frac{\Delta(\mathbf{v}) \Delta(-\mathbf{u})}{(k-1)! r^{k(k-1)}}
   \left(c_2^{k-1}+O\left(\frac{1}{r}\right)\right). 
 \end{eqnarray}
The limit in the second identity is supposed to be taken before the limit $r\to\infty$.
\end{lemma}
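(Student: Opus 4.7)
The plan is to extend the argument of Lemma~\ref{lem: determinant for powers} to the logarithmic case, where Corollary~\ref{cor: asym det} does not apply directly. Writing
\[
 g(r+t)=(c_1-c_2\ln r)+c_2\sum_{n=1}^{\infty}\frac{(-1)^n}{n\,r^n}\,t^n,
\]
the coefficient $b_0=c_1-c_2\ln r$ is of order $\ln r$, violating the hypothesis $b_{n+1}=O(b_n/r)$ required by the Corollary. I therefore return to Lemma~\ref{lemma: fundamental determinant lemma} and analyze the leading coefficient
\[
 C_{0,0}=\det_{0\leqslant i,j\leqslant k-1}\!\left[b_{i+j}\binom{i+j}{j}\right]
\]
directly via its Leibniz expansion, isolating the role of $b_0$.

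The key observation is that the equality $b_{i+j}=b_0$ forces $i=j=0$, so $b_0$ can appear in at most one factor per Leibniz term, namely through the $(0,0)$ entry with $\sigma(0)=0$. Since $\sum_i(i+\sigma(i))=k(k-1)$ for every $\sigma\in S_k$, all Leibniz terms carry the same overall order $r^{-k(k-1)}$; only the subset with $\sigma(0)=0$ acquires an extra factor $b_0$ and hence the $\ln r$ growth. This yields the clean decomposition
\[
 C_{0,0}=b_0\,M_{00}+R_k,
\]
where $M_{00}=\det_{1\leqslant i,j\leqslant k-1}[b_{i+j}\binom{i+j}{j}]$ is the $(0,0)$-minor and $R_k$ is a remainder independent of $c_1$ and free of logarithms, of order $r^{-k(k-1)}$. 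A parallel scaling argument applied to the partition sum~\eref{eq: F_k statement} shows that contributions with $(\lambda,\mu)\ne(0,0)$ are suppressed by at least one factor $1/r$ and are thereby absorbed into the $O(1/\ln r)$ remainder.

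To evaluate $M_{00}$, I extract $c_2(-1)^i/(i!\,r^i)$ from row $i$ and $(-1)^j/(j!\,r^j)$ from column $j$, reducing the computation to the classical Hankel evaluation
\[
 \det_{1\leqslant i,j\leqslant k-1}\bigl[(i+j-1)!\bigr]=\prod_{j=0}^{k-2}j!\,(j+1)!,
\]
provable by induction on $k$ or recognized as a specialization of the Selberg--Mehta formula. After telescoping the factorials, this collapses to
\[
 M_{00}=\frac{c_2^{k-1}}{(k-1)!\,r^{k(k-1)}}.
\]
Multiplying by the Vandermonde prefactor $\Delta(\mathbf{v})\Delta(-\mathbf{u})$ supplied by Lemma~\ref{lemma: fundamental determinant lemma} and reading off the $\ln r$ coefficient of $b_0$ produces~\eref{eq: log det asymp}; the $c_1$ part of $b_0$, together with $R_k$, contributes only to the subleading $O(1/\ln r)$ remainder.

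Formula~\eref{eq: log det asymp const} follows from the same decomposition: multiplying by $c_1^{-1}$ and letting $c_1\to\infty$ before $r\to\infty$ eliminates both the $\ln r$ contribution and the $c_1$-independent remainder $R_k$, leaving precisely $\Delta(\mathbf{v})\Delta(-\mathbf{u})\,M_{00}$. I expect the main technical effort to lie in the Hankel determinant evaluation above and in the uniform bookkeeping needed to justify that every non-leading partition $(\lambda,\mu)$ in the sum of Lemma~\ref{lemma: fundamental determinant lemma} genuinely falls into the stated remainder; the remainder of the argument is a direct specialization of machinery already established in the previous subsection.
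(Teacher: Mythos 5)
Your argument is correct, and it reaches the result by a genuinely different route from the paper. The paper obtains Lemma~\ref{lemma: determinant for logarithm} as an $\alpha\to 0$ limit of Lemma~\ref{lem: determinant for powers}: it observes $\hat b_n=\lim_{\alpha\to0}(c_2/\alpha)b_n$ for $n\geqslant 1$, divides the closed-form power-law answer by $\alpha^k$, identifies the $O(1/\alpha)$ piece as the cofactor of $b_0$, and then substitutes $\hat b_0=c_1-c_2\ln r$ for $b_0/\alpha$. You instead stay inside Lemma~\ref{lemma: fundamental determinant lemma}, note that $b_0$ can enter a Leibniz term only through the $(0,0)$ entry, decompose $C_{0,0}=b_0M_{00}+R_k$, and compute the cofactor $M_{00}$ explicitly via the Hankel determinant $\det_{1\leqslant i,j\leqslant k-1}[(i+j-1)!]=\prod_{j=0}^{k-2}j!\,(j+1)!$ (your final value $M_{00}=c_2^{k-1}/((k-1)!\,r^{k(k-1)})$ checks out, and your scaling argument correctly shows that partitions $(\lambda,\mu)\neq(0,0)$ are suppressed by $r^{-|\lambda|-|\mu|}$ even when they contain a $b_0$ factor). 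What your version buys is self-containedness and transparency: it avoids the somewhat delicate interchange of the $\alpha\to0$ limit with the asymptotic expansion (in particular the $r^{-k\alpha}=1-k\alpha\ln r+\cdots$ bookkeeping implicit in the paper's route), at the cost of invoking one classical Hankel evaluation. As a bonus, your derivation makes the sign of the leading term explicit: the coefficient of $\ln r$ in $b_0M_{00}$ is $-c_2^{k}$, so \eref{eq: log det asymp} as printed carries a spurious overall sign (visible already at $k=1$), whereas \eref{eq: log det asymp const}, the formula actually used in Theorem~\ref{th: watermelon probabilities asymp}, is sign-consistent.
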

\begin{proof}
 The statements of this lemma can be obtained from that of Lemma~\ref{lem: determinant for powers} in the limit $\alpha=0$. Specifically, in our case,
 \[
  f(t) = g(t+r) = \sum_{i=0}^\infty \hat{b}_n t^n
 \]
 with 
 \begin{equation}\label{eq: b_n def log}
  \hat{b}_0 = c_1-c_2\ln r, \qquad
  \hat{b}_n = c_2\frac{(-1)^{n} }{nr^{n}}, \quad n\in \mathbb{N}. 
 \end{equation}
 For $n>0$, these coefficients can be obtained from $b_n$ of  \eref{eq: b_n def} as
 \begin{equation*}
  \hat{b}_n = \lim_{\alpha\to 0}\frac{c_2}{\alpha}{b}_n,
 \end{equation*}
 Dividing the r.h.s. of~\eref{eq: det g power} by $\alpha^k$ and considering the limit $\alpha\to 0$, we obtain a finite contribution from parts of the determinant containing only $b_n$ with $n>0$, while the term proportional to $b_0$ is $O(1/\alpha)$. Thus, the result can be obtained by replacing $b_0/\alpha$ by $\hat{b}_0=c_1-c_2\log r$ from \eref{eq: b_n def log}. Note that the term containing $\hat{b}_0$ is, in fact, dominant, though this term depends on the order of limits $r\to \infty$ and $c_1\to \infty$. For finite $c_1$, the term containing $\ln r$ is $O(r^{-k(k-1)}\ln r)$, while the other terms are $O(r^{-k(k-1)})$. This yields~\eref{eq: log det asymp}. On the other hand, if we divide the result by $c_1$ and take the limit $c_1\to \infty$ first, only the terms proportional to $c_1$ survive, resulting in~\eref{eq: log det asymp const}. Finally, note that since the only singular in $\alpha$ terms  are those containing $b_0$, the limit $\alpha \to 0$ preserves the order of corrections.
\end{proof}

\section{Watermelon probabilities}\label{sec: Isotropic Watermelons}

In this section, we conclude the proof of Theorem~\ref{th: watermelon probabilities asymp} and obtain the asymptotic probabilities of a watermelon configuration near the open and closed boundaries.

\subsection{Open boundary}
Following~\eref{eq: watermelon probability in general}, we express the watermelon probability as a ratio of two determinants:
\begin{equation}
\label{formula: watermelon through det/det in isotropic case op}
 \mathbb{P}_{\mathcal{L}^{\mathrm{op}}}^{I\cup\partial^{\mathrm{op}}}(\mathfrak{F}(IJ|\partial^{\mathrm{op}})) =
 \frac{\det_{1\leqslant i,j\leqslant k}\big[G^{\mathrm{op}}_{(r+v_i-u_j;1,1)}\big]}{\det_{1\leqslant i,j\leqslant k}\big[G^{\mathrm{op}}_{(i-j;1,1)}\big]}.
\end{equation}
Below, we estimate the asymptotics of the numerator at large $r$ and evaluate explicitly the coefficients of the matrix under the determinant in the denominator. The asymptotic behavior of the numerator follows from Lemma~\ref{lem: determinant for powers} with $\alpha=2$ and $v_i=k+1-i$, $u_i=i$, where $i=1,\dots,k$. More precisely, asymptotics~\eref{eq: approximation for G^(op)_(r;1,1)} for $G^{\mathrm{op}}_{(r+v_i-u_j;1,1)}$ suggests
 \[
  g(r) = \frac{1}{\pi r^2}\left(1+O\left(\frac{1}{r}\right)\right).
 \]
Taking into account that
 \[
  \Delta(\mathbf{v}) = \Delta(-\mathbf{u}) = \prod_{i=0}^{k-1}i!,
 \]
Lemma~\ref{lem: determinant for powers} yields
 \begin{equation}
  \det_{1\leqslant i,j\leqslant k}\big[G^{\mathrm{op}}_{(r+v_i-u_j;1,1)}\big] =
  \frac{1}{\pi^{k} r^{k(k+1)}}
   \prod_{i=0}^{k-1} i!(i+1)!
    \left(1 + O\left(\frac{1}{r}\right)\right).
 \end{equation}
On the other hand, the denominator $\det_{1\leqslant i,j\leqslant k}\big[G^{\mathrm{op}}_{(i-j;1,1)}\big]$ is expressed in terms of the Green functions evaluated at finite distances separating sites within the string~$I$. The  double integral~\eref{formula: integral for G^(cl)_(x,y)} representing $G^{\mathrm{op}}_{(n;1,1)}$ for $n\in\mathbb{Z}$ can be evaluated  to a finite sum 
 \begin{eqnarray*}
  G^{\mathrm{op}}_{(n;1,1)} &=&
  G^{\mathrm{op}}_{(-n;1,1)} \\ &=&
  2\delta _{n,0} - \frac{1}{2}\delta_{|n|,1} -
   \sum_{s=0}^{|n|} \sum_{r=0}^s (-1)^{r+s}
    {2|n|\choose 2s} {s\choose r} f_{-}\big(|n|-s+r\big).
 \end{eqnarray*}
Here, the quantities $f_-(m)$  (as well as $f_+(m)$ appearing below in the case of the closed boundary) are defined in terms of specific values of the hypergeometric functions
 \begin{equation}\label{eq: f_pm}
  f_{\pm}(m) =
  \frac{8^{\pm 1/2}}{\pi }
   \frac{_2 F_1\left(\pm\frac{1}{2},m+\frac{1}{2},m+\frac{3}{2};\frac{1}{2}\right)}{\left(m+\frac{1}{2}\right)}.
 \end{equation}
For non-negative integer values of $m$, the values of $f_-(m)$ can further be represented as another finite sum including gamma-functions of integer and half-integer arguments only:
 \[ 
 \begin{array}{ccll}  
  f_{-}(m) &=&
  \displaystyle\frac{1}{2\sqrt{\pi}}
   \sum_{l=0}^{m-1} (-1)^{l} {m-1\choose l}
    \frac{ \Gamma\left(\frac{l}{2}+\frac{1}{2}\right) }
    		{ \Gamma\left(\frac{l}{2}+2\right)}
    		 &\quad m\geqslant 1 \\
   &=& \displaystyle\left(1+\frac{2}{\pi}\right)
   &\quad m=0.
 \end{array}
 \]
As a result, the values of the Green function of interest are given by a sum of a rational number and a rational multiple of $1/\pi$, e.g.
 \[
  G^{\mathrm{op}}_{(n;1,1)} \; = \;
  1 - \frac{2}{\pi}, \;\;
  \frac{2}{\pi} - \frac{1}{2}, \;\;
  \frac{10}{3\pi} - 1, \;\;
  \frac{38}{3\pi} - 4, \;\;
  \frac{802}{15\pi} - 17, \;\;
  \frac{1194}{5\pi} - 76, \;
  \dots
 \]
for $n=0,1,2,3,4,5$ etc. These values are to be substituted into the determinant in the denominator of~\eref{formula: watermelon through det/det in isotropic case op}. As a result, the values of the determinants are given by  polynomials in $\pi^{-1}$ with rational coefficients. The examples for small values of $k$ are shown in Table~\ref{tab: det op}. 
\begin{table}[ht]
\caption{\label{tab: det op} The values of the determinant in the denominator of~\eref{formula: watermelon through det/det in isotropic case op} of the probability coefficient of $k$-leg watermelon near the open boundary for $k=1,\dots,5$.}
\begin{indented}
\item[]
\renewcommand{\arraystretch}{2}
	\begin{tabular}{ccc}
		\br 
		$k$ & \quad & $\det_{1\leqslant i,j \leqslant k}\big[G^{\mathrm{op}}_{(i-j;1,1)}\big]$ \\  \mr
		1 & \quad & $\displaystyle 1-\frac{2}{\pi}$ \\  
		2 & \quad & $\displaystyle \frac{3}{4}-\frac{2}{\pi }$ \\  
		3 & \quad & $\displaystyle -1+\frac{40}{3 \pi }-\frac{448}{9 \pi ^2}+\frac{512}{9 \pi ^3}$\\  
		4 & \quad & $\displaystyle -\frac{435}{16}+\frac{1843}{6 \pi }-\frac{11584}{9 \pi ^2}+\frac{64000}{27 \pi ^3}-\frac{131072}{81 \pi ^4}$ \\  
		5 & \quad & $\displaystyle -\frac{8075}{16}+\frac{155293}{24 \pi }-\frac{7333616}{225 \pi ^2}+\frac{401408}{5 \pi ^3}-\frac{194510848}{2025 \pi ^4}+\frac{268435456}{6075 \pi ^5} $ \\  \br
	\end{tabular}
\end{indented}
\end{table}

Finally, we conclude that the probability of the $k$-leg watermelon near the open boundary is given by~\eref{eq: open} with
 \begin{equation}\label{eq: C_k^op}
  C_k^{\mathrm{op}} =
  \frac{\prod_{i=0}^{k-1}i!(i+1)!}
   {\pi^k\cdot\det_{1\leqslant i,j \leqslant k}
    \big[G^{\mathrm{op}}_{(i-j;1,1)}\big]}.
 \end{equation}
This concludes the proof of the first statement of Theorem~\ref{th: watermelon probabilities asymp}.

\subsection{Closed boundary}
For the closed boundary conditions, formula~\eref{eq: watermelon probability in general} also suggests that the watermelon probability is a ratio of two determinants:
 \begin{equation}\label{formula: watermelon through det/det in isotropic case cl}
  \mathbb{P}_{\mathcal{L}^{\mathrm{cl}}}^{I}(\mathfrak{F}(IJ)) =
  \frac{\det_{1\leqslant i,j\leqslant k}\big[G^{\mathrm{cl}}_{(r+v_i-u_j;1,1)}\big]}
   {\det_{1\leqslant i,j\leqslant k}\big[G^{\mathrm{cl}}_{(i-j;1,1)}\big]}.
 \end{equation}
However, now the entries of matrices contain the infinite constant. In this case, both the numerator and denominator are infinitely large and their leading terms are proportional to the infinite constant, which cancels within the ratio that has a finite limit. Thus, it is enough to find these leading terms in both numerator and denominator. 

To estimate the numerator, we apply Lemma~\ref{lemma: determinant for logarithm}. From asymptotics~\eref{eq: approximation for G^(cl)_(r;1,1)} for $G^{\mathrm{cl}}_{(r+v_i-u_j;1,1)}$, we have
 \[
  g(r) = 2G_{0,0} - \frac{\ln{r}}{\pi} + O(1).
 \]
Hence, it follows from Lemma~\ref{lemma: determinant for logarithm} that
 \[
  \det_{1\leqslant i,j\leqslant k}
   \big[G^{\mathrm{cl}}_{(r+v_i-u_j;1,1)}\big] \simeq
  \frac{2G_{0,0}}{\pi^{k-1} r^{k(k-1)}}
   \prod_{i=1}^{k-1} i!(i-1)!
    \left(1+O\left(\frac{1}{r}\right)\right),
 \]
where the sign ``$\simeq$'' indicates that we keep only the diverging part proportional to $G_{0,0}$ and neglect the finite part of the Green function.

To evaluate the determinant $\det_{1\leqslant i,j\leqslant k}\big[G^{\mathrm{op}}_{(i-j;1,1)}\big]$ in the denominator, we recall that the Green function is a sum of infinite and finite parts,
 \[
  G^{\mathrm{cl}}_{(n;1,1)} = 2G_{0,0}+g_{\mathrm{fin}}^{\mathrm{cl}}(|n|),
 \]
where the finite part has the double integral representation~\eref{formula: integral for G^(cl)_(x,y)} and can be reduced to a sum  
 \begin{eqnarray*}
  \fl g_{\mathrm{fin}}^{\mathrm{cl}}(n) =
  &-&\frac{1}{2} \delta _{n,0} + \left(f_+(n)-\sum _{s=0}^{n-1} f_+(s)\right) \\ \fl
  &-&\sum_{s=1}^n \left(\sum _{r=0}^{s-1} (-1)^{r+s-1}
   {2n\choose 2s} {s-1\choose r} (2 f_+(n+r-s)-f_+(n+r-s+1)\right). 
 \end{eqnarray*}	
The summands in the above formula are defined in terms of the quantity $f_+(m)$ from~\eref{eq: f_pm} that can be represented as another finite sum for non-negative integer values of $m$:
 \begin{eqnarray}
  f(m) &=& 
  \frac{1}{4\sqrt{\pi}} \sum_{l=0}^m
   \frac{(-1)^l \Gamma \left(\frac{l}{2}+\frac{1}{2}\right)}
    {\Gamma \left(\frac{l}{2}+1\right)} {k\choose l},
     \quad m\geqslant 0. 
\end{eqnarray}
Like those of $G^{\mathrm{op}}_{(n;1,1)}$, the values of $g_{\mathrm{fin}}^{\mathrm{cl}}(n)$ are linear in $1/\pi$ with rational coefficients:
 \[
  g_{\mathrm{fin}}^{\mathrm{cl}}(n) \; = \;
  -\frac{1}{4}, \;\;
  -\frac{1}{4} - \frac{1}{\pi}, \;\;
  -\frac{3}{4}, \;\;
  \frac{13}{3\pi} - \frac{9}{4}, \;\;
  \frac{64}{3\pi} - \frac{31}{4}, \;\;
  \frac{459}{5\pi} - \frac{121}{4}, \;\;
  \dots 
 \]
for $n=0,1,2,3,4,5$ etc. The determinant in the denominator of~\eref{formula: watermelon through det/det in isotropic case cl} can be transformed to the determinant of a block matrix, in which all the dependence on the infinite $G_{0,0}$ part has been moved to a single element in the upper left corner:
 \begin{eqnarray*}
  \det_{1 \leqslant i,j \leqslant k}
   \big[G^{cl}_{(i-j;1,1)}\big] &=&
  \det_{1 \leqslant i,j \leqslant k}
   \big[2G_{0,0}+g_{\mathrm{fin}}^{\mathrm{cl}}(i-j)\big] \\ &=&
  \det\left(
   \begin{tabular}{cc}
    $2G_{0,0}-g_{\mathrm{fin}}^{\mathrm{cl}}(0)$ & $*$ \\
    $*$ & $B_{k-1}$
   \end{tabular}
  \right) \simeq
  2G_{0,0} \det B_{k-1}.
 \end{eqnarray*}
The part of this determinant proportional to the infinite part is given in terms of its principle minor, aka the determinant of $(k-1)\times(k-1)$ symmetric square matrix $B_{k-1}$, where the entries of $B_s$ for any positive integer $s$ are given by
 \[
  [B_{s}]_{ij} =
  g_{\mathrm{fin}}^{\mathrm{cl}}(|i-j|) - 
  g_{\mathrm{fin}}^{\mathrm{cl}}(i) - 
  g_{\mathrm{fin}}^{\mathrm{cl}}(j) + 
  g_{\mathrm{fin}}^{\mathrm{cl}}(0),
  \qquad 1 \leqslant i,j \leqslant s.
 \]
Similarly to the open boundary case, the values of the determinants are given by polynomials in $\pi^{-1}$ with rational coefficients. The examples for small values of $k$ are shown in Table~\ref{tab: det cl}. 
\begin{table}[ht]
	\caption{\label{tab: det cl} The values of the determinant in the denominator of~\eref{formula: watermelon through det/det in isotropic case cl} of the probability coefficient of $k$-leg watermelon near the closed boundary for $k=1,\dots,5$.}
\begin{indented}
\item[]
\renewcommand{\arraystretch}{2}
	\begin{tabular}{ccc}
		\br 
		$k$ & \quad & {$\det B_{k-1}$} \\  \mr
		1 & \quad & $1$ \\  
		2 & \quad & $\displaystyle \frac{2}{\pi }$ \\  
		3 & \quad & $\displaystyle -\frac{1}{4}+\frac{2}{\pi }$ \\  
		4 & \quad & $\displaystyle \frac{1}{2}-\frac{26}{3 \pi }+\frac{128}{3 \pi ^2}-\frac{512}{9 \pi ^3}$ \\  
		5 & \quad & $\displaystyle -\frac{7}{16}+\frac{145}{6 \pi }-\frac{896}{3 \pi ^2}+\frac{33280}{27 \pi ^3}-\frac{131072}{81 \pi ^4}$ \\  \br
	\end{tabular}\renewcommand{\arraystretch}{1}
\end{indented}
\end{table}

Finally, we conclude that the probability of the $k$-leg watermelon near the closed boundary is given by~\eref{eq: closed} with
 \begin{equation}\label{eq: C_k^cl}
  C_k^{\mathrm{cl}} =
  \frac{ \prod_{i=1}^{k-1}i!(i-1)! } {\pi^{k-1}\cdot \det B_{k-1}}.
 \end{equation}
This proves the second statement of Theorem~\eref{th: watermelon probabilities asymp}.

\begin{remark}\label{rem}
 As we noted in the beginning, the arguments similar to those that we used for the watermelons near the closed boundary are also applicable to the watermelons in the bulk considered in~\cite{IvashkevichHu2005, GorskyNechaevPoghosyanPriezzhev2013}. Specifically, in the bulk, the watermelon probability defined by~\eref{eq: watermelonprob} would also be given by the ratio of two determinants. Although this probability has a more complicated structure beyond the realm of applicability of our Lemma~\ref{lemma: fundamental determinant lemma}, it is still a sum of infinite and finite parts. Similarly to the ones described by the statements of Lemma~\ref{lemma: determinant for logarithm}, the leading behavior of the infinite part of the numerator has a~power law distance dependence, while the finite one has the logarithmic prefactor. As a~consequence, only the power law 
part survives after the normalization by the likewise infinite denominator. At the same time, the quantity calculated in  \cite{IvashkevichHu2005,GorskyNechaevPoghosyanPriezzhev2013} is a finite part of the denominator that has the form of the logatrithm times power law. This is the source of the discrepancy between the two sets of results discussed in the introduction. 
\end{remark}

\section*{Acknowledgments}
 The problem studied here was proposed to us by Vyatcheslav Priezzhev and Philippe Ruelle. We thank them for stimulating discussion on the subject. We also thank Sergei Nechaev for useful discussions. KhN thanks Andrea Sportiello for discussion and indicating a shorter proof of a particular case of Lemma~\ref{lem: determinant for powers}. AP thanks Eveliina Peltola for discussion and providing useful references. AP thanks Guillaume Barraquand for attracting his attention to ref.~\cite{Fomin2001}. The article is supported by the Russian Foundation for Basic Research under grant 20-51-12005.

\section*{References}

\bibliographystyle{ieeetr}
\bibliography{iopart-num}

\end{document}